    \newtheorem{theorem}{Theorem}
    \newtheorem{lemma}[theorem]{Lemma}
    \newtheorem{corollary}[theorem]{Corollary}
    \newtheorem{conjecture}[theorem]{Conjecture}
    \tikzstyle{vertex}=[circle,fill=black!100,minimum size=6pt,inner sep=0pt]
    \tikzstyle{edge} = [draw,thick,-]
    \tikzstyle{dashed-edge} = [draw,thick,-, dashed]
    \tikzstyle{very-thick-edge} = [draw,line width=0.8mm,-]
    \tikzstyle{dotted-edge} = [draw,thick,-, dotted]
    \tikzstyle{possible-path} = [edge, loosely dotted]
    \tikzstyle{very-thick-possible-path} = [very-thick-edge, dotted]
    \tikzstyle{color-label}=[font=\small,inner sep=0pt]
    \tikzstyle{graph-label}=[font=\small,inner sep=0pt]
\begin{document}
    \begin{frontmatter}
        \title{Edge coloring of graphs of signed class $1$ and $2$}
        \author{Robert Janczewski\fnref{janczewski_footnote}}
        \address{Department of Algorithms and System Modeling, Faculty of Electronics, Telecommunication and Informatics, Gdańsk University of Technology, ul.\ Narutowicza $11/12$, Gdańsk, Poland}
        \fntext[janczewski_footnote]{E-mail: skalar@eti.pg.gda.pl}
        
        \author{Krzysztof Turowski\corref{grant}\fnref{turowski_footnote}}
        \address{Theoretical Computer Science Department, Jagiellonian University, Kraków, $30$–$348$, Poland}
        \fntext[turowski_footnote]{E-mail: krzysztof.szymon.turowski@gmail.com}
        \cortext[grant]{Krzysztof Turowski's research was funded in whole by Polish National Science Center $2020$/$39$/D/ST$6$/$00419$ grant. For the purpose of Open Access, the author has applied a CC-BY public copyright licence to any Author Accepted Manuscript (AAM) version arising from this submission.}
        
        \author{Bartłomiej Wróblewski\corref{corresponding}\fnref{wroblewski_footnote}}
        \address{Department of Algorithms and System Modeling, Faculty of Electronics, Telecommunication and Informatics, Gdańsk University of Technology, ul.\ Narutowicza $11/12$, Gdańsk, Poland}
        \cortext[corresponding]{Corresponding author}
        \fntext[wroblewski_footnote]{E-mail: bart.wroblew@gmail.com}

        \begin{abstract}
            Recently, Behr \cite{behr} introduced a notion of the chromatic index of signed graphs and proved that for every signed graph $(G$, $\sigma)$ it holds that
            \begin{displaymath}
                \Delta(G)\leq\chi'(G\text{, }\sigma)\leq\Delta(G)+1\text{,}
            \end{displaymath}
            where $\Delta(G)$ is the maximum degree of $G$ and $\chi'$ denotes its chromatic index.
            
            In general, the chromatic index of $(G$, $\sigma)$ depends on both the underlying graph $G$ and the signature $\sigma$. In the paper we study graphs $G$ for which $\chi'(G$, $\sigma)$ does not depend on $\sigma$. To this aim we introduce two new classes of graphs, namely $1^\pm$ and $2^\pm$, such that graph $G$ is of class
            $1^\pm$ (respectively, $2^\pm$) if and only if $\chi'(G$, $\sigma)=\Delta(G)$ (respectively, $\chi'(G$, $\sigma)=\Delta(G)+1$) for all possible signatures $\sigma$.
            
            We prove that all wheels, necklaces, complete bipartite graphs $K_{r,t}$ with $r\neq t$ and almost all cacti graphs are of class $1^\pm$. Moreover, we give sufficient and necessary conditions for a graph to be of class $2^\pm$, i.e. we show that these graphs must have odd maximum degree and give examples of such graphs with
            arbitrary odd maximum degree bigger than $1$. 
        \end{abstract}
        \begin{keyword}
            signed graphs \sep edge coloring signed graphs \sep complete bipartite graphs \sep cacti
        \end{keyword}
    \end{frontmatter}

\section{Introduction}
    In this paper we only consider simple, finite and undirected graphs. Graph $G$ has a set of vertices $V(G)$ and a set of edges $E(G)$ with $n(G)$, $m(G)$ denoting their cardinalities, respectively. By $\deg_G(v)$ we denote degree of a vertex $v$ in a graph $G$ and by $\Delta(G)$ the maximum degree of all vertices of $G$.
    \emph{Incidence} is a pair $(v$, $e)$, where $v$ is a vertex, $e$ is an edge and $v$ is one of the endpoints of $e$. Incidence $(v$, $e)$ will be shortly denoted by $v\colon e$ and the set of all graph's incidences will be denoted by $I(G)$. All other definitions and symbols are consistent with those defined in Diestel \cite{diestel}.
    
    Signed graphs were introduced in $1950$s by Harary \cite{harary_signed} as a generalization of simple graphs. Their main purpose was to better model social relations of disliking, indifference and liking. A \emph{signed graph} is a pair $(G$, $\sigma)$ where $G$ is a graph and $\sigma\colon E(G)\to\{\pm 1\}$ is a function. $G$ and
    $\sigma$ are called the \emph{underlying} graph of $(G$, $\sigma)$ and the \emph{signature} of $(G$, $\sigma)$, respectively. Edge $e\in E(G)$ will be called \emph{positive} (respectively, \emph{negative}) if and only if $\sigma(e)=1$ (respectively, $\sigma(e)=-1$). Cycle in $(G$, $\sigma)$ is called \emph{positive} (respectively, \emph{negative}) if product of signs of its edges is positive (respectively, negative). A signed graph with only positive cycles is \emph{balanced} otherwise it is \emph{unbalanced}.

    Switching of a set $V'\subseteq V(G)$ of a signed graph $(G$, $\sigma)$ is an operation resulting in a signed graph $(G$, $\sigma')$ such that for every edge $uv\in E(G)$ we have
    \begin{displaymath}
        \sigma'(uv)=
        \begin{cases}
            -\sigma(uv)\text{, } & \text{if exactly one of $u$, $v$ belongs to $V'$,} \\
            \sigma(uv)\text{, } & \text{otherwise.}
        \end{cases}
    \end{displaymath}
    For example, switching a vertex, i.e.\ switching a subset of vertices of cardinality one, negates the signs of its incident edges. If a signed graph $(G$, $\sigma')$ can be obtained by switching some of the vertices of $(G$, $\sigma)$, we say that $(G$, $\sigma')$ and $(G$, $\sigma)$ are \emph{switching equivalent}. It is
    well-known that switching equivalence is an equivalence relation in the set of all signed graphs with a fixed underlying graph.

    In $2020$, Behr \cite{behr} introduced a concept of edge coloring of signed graphs as a generalization of ordinary graph edge coloring. Let $n$ be a positive integer and
    \begin{displaymath}
        M_n=
        \begin{cases}
            \{0\text{, }\pm 1\text{, }\ldots\text{, }\pm k\}\text{, } & \text{if $n=2k+1$,}\\
            \{\pm 1\text{, }\ldots\text{, }\pm k\}\text{, } & \text{if $n=2k$.}
        \end{cases}
    \end{displaymath}
    An \emph{$n$-edge-coloring} of a signed graph $(G$, $\sigma)$ is a function $f\colon I(G)\to M_n$ such that $f(u\colon uv)=-\sigma(uv)f(v\colon uv)$ for each edge $uv\in E(G)$ and $f(u\colon e_1)\neq f(u\colon e_2)$ for all edges $e_1\neq e_2$ such that $u\colon e_1$, $u\colon e_2\in I(G)$. By $\chi'(G$, $\sigma)$ we denote
    \emph{the chromatic index} of a signed graph $(G$, $\sigma)$, i.e.\ the smallest $n$ for which $(G$, $\sigma)$ has an $n$-edge-coloring.

    Behr \cite{behr} proved that every signed path can be colored using $2$ colors and a signed cycle can be colored with $2$ colors if and only if it's balanced. The main result of Behr's article \cite{behr} is the generalized Vizing's theorem. We will call it the Behr's theorem to emphasize the importance of Behr's achievement.
    \begin{theorem}[Behr \cite{behr}]
        $\Delta(G)\leq\chi'(G$, $\sigma)\leq\Delta(G)+1$ for all signed graphs $(G$, $\sigma)$. \qed
    \end{theorem}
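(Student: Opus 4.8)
The plan is to prove the two inequalities separately, the left one by a direct counting argument and the right one by induction on $m(G)$, following the structure of Vizing's classical proof while tracking the signature carefully.

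For the lower bound, let $v$ be a vertex with $\deg_G(v)=\Delta(G)$ and note that, by the definition of an $n$-edge-coloring, the incidences $v\colon e$ for the $\Delta(G)$ edges $e$ incident with $v$ must receive pairwise distinct colors. Since $|M_n|=n$ for every $n$, any coloring forces $n\geq\Delta(G)$, hence $\chi'(G,\sigma)\geq\Delta(G)$.

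For the upper bound I would set $n=\Delta(G)+1$, so that $|M_n|=\Delta(G)+1$, and argue by induction on $m(G)$, the inductive statement being that every signed graph of maximum degree at most $\Delta(G)$ admits an $M_n$-edge-coloring (this formulation keeps the color set fixed as we pass to subgraphs). I would remove an edge $e=xy$ and color $(G-e,\sigma)$ by the inductive hypothesis. The key structural observation is that $|f(u\colon uv)|=|f(v\colon uv)|$ for every edge, so each edge carries a well-defined absolute color: the color $0$ then behaves like an ordinary matching color, while each nonzero absolute color class spans a subgraph of maximum degree at most $2$. Since $x$ and $y$ each lose an incident edge, each of them has at least two free colors in $M_n$, and coloring $e$ amounts to choosing a color $c$ free at $x$ whose partner $-\sigma(e)c$ is free at $y$. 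If such a $c$ exists, we are done immediately.

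The remaining, and main, difficulty is the case in which no directly compatible pair of free colors exists; here I would adapt Vizing's fan and Kempe-chain recoloring to the signed setting. The obstacle is that, unlike in the unsigned case, a color seen as $\alpha$ at one endpoint of an edge is seen as $-\sigma(e)\alpha$ at the other, so the correspondence between colors free at $x$ and colors free at $y$ is twisted by the signature, and the naive ``$\alpha$/$\beta$-subgraph'' need not have maximum degree $2$. I would therefore define sign-aware alternating chains whose relevant color classes have bounded degree, show that swapping colors along such a chain preserves the constraint $f(u\colon uv)=-\sigma(uv)f(v\colon uv)$ at every edge, and prove that the process terminates by freeing a compatible color for $e$. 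Establishing that these chains are well-behaved is exactly where the parity of signs along cycles enters --- mirroring Behr's result that a cycle is $2$-colorable precisely when it is balanced --- and constitutes the heart of the argument. Since switching equivalence preserves $\chi'$ (one may negate all incidence colors at the switched vertices), I would additionally use switching to normalize signs locally and simplify the chain bookkeeping before completing the induction.
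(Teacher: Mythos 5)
Your lower-bound argument is complete and correct: the $\Delta(G)$ incidences at a maximum-degree vertex must receive pairwise distinct colors from $M_n$, and $|M_n|=n$, so any coloring forces $n\geq\Delta(G)$. The preliminary observations for the upper bound are also sound: $|f(u\colon uv)|=|f(v\colon uv)|$ holds on every edge, each endpoint of the deleted edge has at least two free colors when $n=\Delta(G)+1$, and switching a vertex set can indeed be matched by negating all incidence colors at the switched vertices, so $\chi'$ is a switching invariant.

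However, the proposal stops exactly where the theorem becomes nontrivial. The entire case in which no color $c$ free at $x$ has its partner $-\sigma(e)c$ free at $y$ is handled by announcing that you \emph{would} define sign-aware alternating chains, \emph{show} that swapping along them preserves the constraint $f(u\colon uv)=-\sigma(uv)f(v\colon uv)$, and \emph{prove} termination --- none of which is carried out, and none of which is a routine transcription of Vizing's argument. As you yourself observe, the correspondence between free colors at the two endpoints is twisted by the signature; moreover the color $0$ is self-paired while every other color is paired with its negative (and $0$ exists only for odd $n$), so one must specify which color classes one alternates along, verify that the resulting chains are paths or balanced cycles rather than ill-behaved subgraphs, and check that the fan rotation survives the sign bookkeeping --- this is where the balanced/unbalanced dichotomy for cycles genuinely enters, and it occupies the bulk of Behr's actual proof. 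Note also that the paper itself does not prove this statement: it is quoted from Behr \cite{behr} with a \qedsymbol{} and the proof deferred to that reference, so there is no in-paper argument to compare against. As it stands, your upper bound is a plausible plan with its central step missing, not a proof; you would need either to execute the signed fan/chain argument in full or to cite Behr as the paper does.
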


    In this paper we introduce two new classes of graphs, namely $1^\pm$ and $2^\pm$, such that graph $G$ is of class $1^\pm$ (respectively, $2^\pm$) if and only if $\chi'(G$, $\sigma)=\Delta(G)$ (respectively, $\chi'(G$, $\sigma)=\Delta(G)+1$) for all possible signatures $\sigma$. In Section \ref{section2} we relate these classes to previously known notions. We
    also show sufficient and necessary conditions for a regular signed graph $(G$, $\sigma)$ to be $\Delta(G)$-edge-colorable and use these results to obtain sufficient and necessary conditions for a regular signed graph to be of class $2^\pm$. Next, we show that graphs of class $2^\pm$ must be of odd maximum degree and there
    are such graphs for any greater than $1$ odd value of the degree. In Section \ref{section3} we prove that all cacti except cycles are of class $1^\pm$. Section \ref{section4} deals with wheels---they are all of class $1^\pm$. In Section \ref{section5} we show that all necklaces except cycles are of class $1^\pm$.
    In Section \ref{section6} we deal with complete bipartite graphs by showing that graphs $K_{r,t}$ are of class $1^\pm$ for any $r\neq t$. The paper ends with some open problems and conjectures.

\section{$1^\pm$ and $2^\pm$}\label{section2}
    Behr \cite{behr_arxiv} defined \emph{class ratio $\mathcal{C}(G)$} of graph $G$ to be the number of possible signatures $\sigma\colon E(G)\to\{\pm 1\}$ such that the signed graph $(G$, $\sigma)$ is $\Delta(G)$-colorable, divided by the number of all possible signatures defined on $E(G)$, i.e.\ $2^{m(G)}$. The class ratio is a
    rational number satisfying $0\leq\mathcal{C}(G)\leq 1$ and, by previous definitions, $G$ is of class $1^\pm$ (respectively, $2^\pm$) if and only if $\mathcal{C}(G)=1$ (respectively, $\mathcal{C}(G)=0$). 
    Clearly, there are graphs that are neither of class $1^\pm$ nor $2^\pm$. The obvious example is a cycle---all signed cycles are $2$-colorable if are balanced and
    require $3$ colors otherwise.
    It is also easy to see that if $G$ is of class $1^\pm$ ($2^\pm$), then it is of class $1$ ($2$), where class $1$ ($2$) is defined as all graphs $G$ whose
    ordinary chromatic index equals $\Delta(G)$ ($\Delta(G)+1$). This relation provides justification for our notation: class $X^\pm$ is both a signed counterpart of class $X$ and a subclass of $X$.

    Recall that graph $G$ is \emph{$r$-regular} if and only if $\deg_G(v)=r$ for all $v\in V(G)$. Recall also that $M\subseteq E(G)$ is a (\emph{perfect}) \emph{matching} in $G$ if every vertex of $G$ is incident to at most (exactly) one edge of $M$.
    \begin{lemma}\label{regular_lemma}
        Let $G$ be a $k$-regular graph and $\sigma\colon E(G)\to\{\pm 1\}$ be a signature. $\chi'(G$, $\sigma)=\Delta(G)$ if and only if one of the following conditions holds:
        \begin{enumerate}
            \item $k=2r$ where $r$ is a positive integer and $(G$, $\sigma)$ admits a decomposition into exactly $r$ spanning edge disjoint $2$-regular balanced subgraphs.
            \item $k=2r+1$ where $r$ is a positive integer and $(G$, $\sigma)$ admits a decomposition into a perfect matching and exactly $r$ spanning edge disjoint $2$-regular balanced subgraphs.
        \end{enumerate}
    \end{lemma}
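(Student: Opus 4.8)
The plan is to invoke Behr's theorem, which guarantees $\chi'(G,\sigma)\in\{k,k+1\}$, so that it suffices to characterise when $(G,\sigma)$ admits a $k$-edge-coloring $f\colon I(G)\to M_k$. The governing idea is to read off, from such a coloring, the subgraphs induced by pairs of opposite colors, and to show that these are exactly the balanced $2$-regular subgraphs in the statement. Throughout I will use repeatedly that each vertex $v$ has exactly $k$ incidences, that $|M_k|=k$, and that a proper coloring forces these $k$ incidences to receive $k$ distinct colors; hence \emph{every} color of $M_k$ occurs exactly once at every vertex.

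For the forward direction, suppose $f$ is a $k$-edge-coloring. For each $c\in\{1,\dots,r\}$ let $H_c$ be the spanning subgraph whose edges are those having an incidence colored $c$ or $-c$. Along any edge $uv$ the constraint $f(v\colon uv)=-\sigma(uv)f(u\colon uv)$ keeps both incidences inside $\{c,-c\}$ as soon as one of them lies there, so $H_c$ is well defined, and since $c$ and $-c$ each occur exactly once at every vertex, $H_c$ is $2$-regular, i.e.\ a disjoint union of cycles. The key step is to check that each such cycle $v_1v_2\cdots v_\ell v_1$ is balanced. Writing $a_i$ for the color of the incidence of $e_i=v_iv_{i+1}$ at $v_i$, the defining relation gives that the incidence of $e_i$ at $v_{i+1}$ has color $-\sigma(e_i)a_i$, and properness at $v_{i+1}$ forces $a_{i+1}=\sigma(e_i)a_i$. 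Composing around the cycle yields $a_1=\bigl(\prod_{i=1}^{\ell}\sigma(e_i)\bigr)a_1$, so $\prod_i\sigma(e_i)=1$ and the cycle is positive. Thus $H_1,\dots,H_r$ are spanning, edge-disjoint (distinct color pairs), $2$-regular and balanced. If $k=2r+1$, the color $0$ behaves analogously: the relation forces both incidences of an edge to be $0$ simultaneously, and as $0$ occurs once at every vertex these edges form a perfect matching. This produces the decompositions in (1) and (2).

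For the converse, suppose the decomposition exists. I assign to the $i$-th balanced $2$-regular subgraph the color pair $\{i,-i\}$ and color its incidences by fixing one incidence on each of its cycles and propagating via $a_{i+1}=\sigma(e_i)a_i$; balance is precisely the condition that makes this propagation consistent when the cycle closes, and the rule guarantees that the two incidences met at each vertex receive opposite colors $i$ and $-i$. When $k=2r+1$ I color every incidence of the perfect matching with $0$, which satisfies $0=-\sigma(e)\cdot 0$ and is proper. The resulting $f$ uses each color of $M_k$ once per vertex, hence is a proper $k$-edge-coloring, giving $\chi'(G,\sigma)=\Delta(G)$.

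The main obstacle is the balance bookkeeping encapsulated in the identity $a_{i+1}=\sigma(e_i)a_i$: it is what simultaneously forces the cycles of $H_c$ to be positive in the forward direction and certifies that a chosen $2$-coloring closes up consistently in the converse. Everything else reduces to the counting argument on how often each color of $M_k$ can appear at a vertex.
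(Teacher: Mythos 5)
Your proof is correct, and it rests on the same structural correspondence as the paper's---pairs of opposite colors $\{c,-c\}$ correspond to the $2$-regular balanced subgraphs, and color $0$ to the (perfect) matching---but both directions are executed differently, and in ways worth noting. For the forward direction the paper argues globally by edge counting: each opposite-color pair induces a subgraph of maximum degree at most $2$ with at most $n(G)$ edges (and the $0$-edges a matching of size at most $\frac{1}{2}n(G)$), and since $m(G)=r\cdot n(G)$, respectively $r\cdot n(G)+\frac{1}{2}n(G)$, all these bounds are forced to be tight, whence the subgraphs are spanning and $2$-regular and the matching is perfect. You argue locally instead: a vertex of a $k$-regular graph has exactly $k$ incidences, $|M_k|=k$, and properness forces a bijection, so every color occurs exactly once at every vertex; this yields $2$-regularity of each $H_c$ and perfection of the $0$-matching immediately, with no counting of edges at all---a cleaner route to the same facts. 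Second, where the paper disposes of balance in both directions by implicitly invoking Behr's result that a signed cycle is $2$-colorable if and only if it is balanced (``if it can be colored with two colors, it has to be balanced''), you re-derive this from first principles via the propagation identity $a_{i+1}=\sigma(e_i)a_i$: once to force $\prod_{i=1}^{\ell}\sigma(e_i)=1$ around each cycle of $H_c$ (correctly using $a_1\neq 0$, valid since $c\neq 0$), and once to certify that the $2$-coloring of a balanced cycle closes up consistently in the converse. This makes your proof self-contained where the paper's leans on an external lemma, at the modest cost of some bookkeeping. One shared blemish, inherited from the statement rather than introduced by either proof: the case $k=1$ satisfies $\chi'(G,\sigma)=\Delta(G)$ (color everything $0$) yet meets neither condition, since both require $r$ to be a positive integer, so the lemma implicitly assumes $k\geq 2$.
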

    \begin{proof}
        We prove the equivalence by proving both implications.
        
        ($\Leftarrow)$ In the $k=2r$ case, each of the $r$ spanning edge disjoint $2$-regular balanced subgraphs can be colored using exactly $2$ colors, so in total $2r$ colors are sufficient to color the whole graph. In the $k=2r+1$ case, the matching can be colored using color $0$, each of the $r$ spanning edge disjoint $2$-regular
        balanced subgraphs can be colored using $2$ colors, so in total $2r+1$ colors are also sufficient. Since $\Delta(G)=2r$ in the first case and $\Delta(G)=2r+1$ in the second, we get $\chi'(G$, $\sigma)=\Delta(G)$.
        
        ($\Rightarrow)$ We consider the two cases separately:
        \begin{enumerate}
            \item[$(1)$] $G$ is a $2r$-regular graph, so $m(G)=r\cdot n(G)$. Let $c$ be an arbitrary $\Delta(G)$-edge-coloring of $(G$, $\sigma)$. For every pair of opposite colors used by $c$, edges colored with them form a subgraph with maximum degree not exceeding $2$. Clearly, all such subgraphs are edge disjoint. Each of the
            subgraphs has at most $n(G)$ edges. There are exactly $r$ such subgraphs in $G$ and $G$ has $r\cdot n(G)$ edges, so every subgraph contains exactly $n(G)$ edges. If a subgraph with maximum degree not exceeding $2$ contains $n(G)$ edges, it clearly is spanning and $2$-regular. If it can be colored with two colors, it has to
            be balanced, too.
            
            \item[$(2)$] $G$ is a $(2r+1)$-regular graph, so $m(G)=r\cdot n(G)+\frac{1}{2}n(G)$. Let $c$ be an arbitrary $\Delta(G)$-edge-coloring of $(G$, $\sigma)$. The set of edges colored in $c$ with color $0$ must form a matching in $G$. Matching in $G$ can have maximum size of $\frac{1}{2}n(G)$, so there have to be at least
            $r\cdot n(G)$ edges colored by $c$ with colors different than $0$. Such edges form $r$ subgraphs with maximum degree not exceeding $2$, each of them is colored with $\{-k, k\}$ for some $k \neq 0$. Clearly, they are edge disjoint subgraphs with degree not exceeding $2$, so they have in total at most $r\cdot n(G)$
            edges. Combining these observations, we conclude that these subgraphs have exactly $r\cdot n(G)$ edges. This implies that they are $2$-regular, spanning and balanced -- since otherwise they could not be colored using only colors $\{-k, k\}$. This also implies that our matching has $\frac{1}{2}n(G)$ edges, which means it is perfect.
        \end{enumerate}
    \end{proof}
    Let $\mathbf{1}_X$ denote the function $X\ni x\mapsto 1\in\{\pm 1\}$.
    \begin{corollary}\label{even_delta_not_class_2}
        If $G$ is a graph with even $\Delta(G)$, then $G$ is not of class $2^\pm$.
    \end{corollary}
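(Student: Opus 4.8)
The plan is to show that the all-positive signature already witnesses $\chi'=\Delta(G)$. Recall that $G$ fails to be of class $2^\pm$ precisely when some signature $\sigma$ satisfies $\chi'(G,\sigma)=\Delta(G)$, so I would take $\sigma=\mathbf{1}_{E(G)}$ and prove $\chi'(G,\mathbf{1}_{E(G)})=\Delta(G)$. Write $\Delta(G)=2r$. By Behr's theorem the lower bound $\chi'(G,\mathbf{1}_{E(G)})\ge\Delta(G)$ is automatic, so the whole task is to construct an explicit $2r$-edge-coloring.

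The reduction I would use is a decomposition of $E(G)$ into $r$ subgraphs $G_1,\ldots,G_r$, each of maximum degree at most $2$. Given such a decomposition, assign to $G_i$ the color pair $\{i,-i\}$ (these are exactly the $r$ pairs available in $M_{2r}$). Each $G_i$ is a disjoint union of paths and cycles, and every cycle of $G_i$ is positive under $\mathbf{1}_{E(G)}$, hence balanced; by Behr's results on signed paths and balanced signed cycles, each component of $G_i$ admits a $2$-edge-coloring using only the pair $\{i,-i\}$. Since distinct $G_i$ use disjoint color pairs and each vertex meets $G_i$ in at most two incidences, the colors at every vertex are pairwise distinct, so the union of these component colorings is a valid $2r$-edge-coloring of $(G,\mathbf{1}_{E(G)})$.

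The technical heart, and the step I expect to be the main obstacle, is the decomposition into $r$ subgraphs of maximum degree at most $2$ when $\Delta(G)=2r$. I would obtain it by a standard Eulerian argument: first add a set of edges joining the (evenly many) odd-degree vertices in pairs, so that the resulting multigraph $G^\ast$ has all degrees even while still satisfying $\Delta(G^\ast)\le 2r$; then orient the edges along an Eulerian circuit in each component, giving every vertex equal in- and out-degree, each at most $r$. Forming the bipartite graph on two copies $V^+,V^-$ of $V(G^\ast)$ with an edge $u^+v^-$ for every arc $u\to v$ yields a bipartite graph of maximum degree at most $r$, which by König's edge-coloring theorem is $r$-edge-colorable; the $r$ color classes pull back to $r$ subgraphs of $G^\ast$ in which every vertex has in- and out-degree at most $1$, i.e.\ degree at most $2$. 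Discarding the added edges leaves the desired decomposition of $G$. (Equivalently, one may embed $G$ into a $2r$-regular graph and invoke Petersen's $2$-factorization theorem.)

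Combining the two pieces gives $\chi'(G,\mathbf{1}_{E(G)})\le 2r=\Delta(G)$, hence equality, so $G$ is not of class $2^\pm$. The only points requiring care are the parity bookkeeping in the Eulerian step---ensuring the auxiliary edges never push a degree above $2r$---and the observation that all cycles become balanced under the all-positive signature, which is exactly what lets every maximum-degree-$2$ part be colored with a single color pair.
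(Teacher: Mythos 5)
Your proof is correct, and it follows the paper's overall strategy --- take the all-positive signature, so every cycle is balanced, split the edges into $r$ pieces of maximum degree at most $2$, and spend one color pair $\{i,-i\}$ per piece --- but it realizes the decomposition step by a genuinely different mechanism. The paper embeds $G$ into a $\Delta(G)$-regular supergraph $H$, applies Petersen's $2$-factor theorem to decompose $H$ into $r$ spanning balanced $2$-regular subgraphs, invokes Lemma~\ref{regular_lemma} to conclude $\chi'(H,\mathbf{1}_{E(H)})=\Delta(H)$, and finishes by restricting the coloring to $G$; you instead work directly on $G$, producing $r$ subgraphs of maximum degree at most $2$ (not necessarily $2$-regular or spanning) via the classical Eulerian-orientation-plus-K\"onig argument, and then color each piece by hand using Behr's $2$-colorability of signed paths and balanced signed cycles --- you even note the Petersen/supergraph route as an equivalent alternative. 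Your parity bookkeeping is sound (odd-degree vertices have degree at most $2r-1$, so pairing them with auxiliary edges keeps $\Delta(G^\ast)\le 2r$; multi-edges created by pairing already-adjacent vertices are harmless since K\"onig's theorem holds for bipartite multigraphs, and any resulting digon loses its auxiliary edge upon restriction to the simple graph $G$). What the paper's route buys is brevity and reuse: Lemma~\ref{regular_lemma} is already established and the same construction is cited again inside the proof of Theorem~\ref{class_2_conditions}. What your route buys is self-containedness --- you bypass both the supergraph construction and Lemma~\ref{regular_lemma} entirely --- and it isolates the slightly more flexible combinatorial fact that any graph with $\Delta=2r$ decomposes into $r$ subgraphs of maximum degree at most $2$, at the cost of heavier machinery (Eulerian circuits and K\"onig's edge-coloring theorem) in place of a single appeal to Petersen's theorem.
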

    \begin{proof}
        Every graph has a regular supergraph with the same maximum degree. Let $H$ be a $\Delta(G)$-regular supergraph of $G$. It follows from Petersen's $2$-factor theorem \cite{mulder1992julius} that $H$ admits a decomposition into $r$ edge disjoint $2$-regular, spanning subgraphs. Since all edges in a signed graph $(H$, $\mathbf{1}_{E(H)})$ are positive,
        these subgraphs are balanced. Lemma \ref{regular_lemma} implies that $\chi'(H$, $\mathbf{1}_{E(H)})=\Delta(H)$. Hence $\chi'(G$, $\mathbf{1}_{E(G)})\leq\chi'(H$, $\mathbf{1}_{E(H)})=\Delta(H)=\Delta(G)$, which shows that $G$ is not of class $2^\pm$.
    \end{proof}
    \begin{theorem}\label{class_2_conditions}
        Let $G$ be a graph. $G$ is of class $2^\pm$ if and only if $\Delta(G)$ is odd and there is no matching $M$ in $G$ such that $\Delta(G\setminus M)<\Delta(G)$.
    \end{theorem}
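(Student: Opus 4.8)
The plan is to prove both implications from a single structural observation: when $\Delta(G)$ is odd, say $\Delta(G)=2r+1$, the color set $M_{2r+1}=\{0,\pm1,\ldots,\pm r\}$ contains $0$, and the edges colored $0$ in any $\Delta(G)$-edge-coloring form a matching whose removal lowers the maximum degree. Corollary \ref{even_delta_not_class_2} already disposes of the parity condition, so the work is to tie the ``no degree-reducing matching'' condition to the existence of a $\Delta(G)$-edge-coloring.

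For the forward direction, suppose $G$ is of class $2^\pm$. The contrapositive of Corollary \ref{even_delta_not_class_2} forces $\Delta(G)$ to be odd, so it remains to rule out a matching $M$ with $\Delta(G\setminus M)<\Delta(G)$. Assuming such an $M$ exists, I would construct a $\Delta(G)$-edge-coloring of $(G,\mathbf{1}_{E(G)})$, contradicting class $2^\pm$. Since $\Delta(G\setminus M)\leq 2r$, embedding $G\setminus M$ into a $2r$-regular supergraph and applying Petersen's $2$-factor theorem exactly as in the proof of Corollary \ref{even_delta_not_class_2} yields $\chi'(G\setminus M,\mathbf{1}_{E(G\setminus M)})\leq 2r$, using only the nonzero colors $\{\pm1,\ldots,\pm r\}$. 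Assigning color $0$ to every edge of $M$ is consistent for positive edges because $0=-1\cdot 0$, and at each vertex the unique incident $M$-edge receives $0$ while the remaining edges receive distinct nonzero colors; hence this extends to a valid $(2r+1)$-edge-coloring. Therefore $\chi'(G,\mathbf{1}_{E(G)})=\Delta(G)$, the required contradiction.

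For the reverse direction, suppose $\Delta(G)=2r+1$ is odd and no matching reduces the maximum degree, and suppose for contradiction that $\chi'(G,\sigma)=\Delta(G)$ for some signature $\sigma$, witnessed by a coloring $f\colon I(G)\to M_{2r+1}$. The defining identity $f(u\colon uv)=-\sigma(uv)f(v\colon uv)$ shows that $f(u\colon uv)=0$ holds precisely when $f(v\colon uv)=0$, so the set $M$ of edges on which $f$ takes the value $0$ is well defined, and the distinctness condition at each vertex makes $M$ a matching. On $G\setminus M$ the coloring $f$ uses only the $2r$ nonzero colors, so every vertex has degree at most $2r$ there, giving $\Delta(G\setminus M)\leq 2r<\Delta(G)$ and contradicting the hypothesis. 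Thus no signature admits a $\Delta(G)$-edge-coloring, and by Behr's theorem $\chi'(G,\sigma)=\Delta(G)+1$ for every $\sigma$, i.e.\ $G$ is of class $2^\pm$. I expect the main obstacle to lie in the forward direction, specifically in verifying that the recoloring of $G\setminus M$ can be arranged to avoid color $0$ and to remain valid under the all-positive signature after reintroducing $M$; the reverse direction, by contrast, is the clean conceptual core, resting entirely on the fact that color $0$ carves out a degree-reducing matching.
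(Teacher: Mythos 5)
Your proof is correct and takes essentially the same approach as the paper's: the forward direction reuses the Petersen-based construction of Corollary \ref{even_delta_not_class_2} on $G\setminus M$ (producing a coloring that avoids color $0$) and assigns $0$ to $M$, while the reverse direction observes that the zero-colored edges of any $\Delta(G)$-edge-coloring form a matching whose removal leaves only $2r$ nonzero colors and hence lowers the maximum degree. The only cosmetic difference is that you embed $G\setminus M$ into a $2r$-regular supergraph, whereas the paper notes $\Delta(G\setminus M)=\Delta(G)-1$ and invokes the corollary's proof directly.
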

    \begin{proof}
        If follows from Corollary \ref{even_delta_not_class_2} that if graph is of class $2^\pm$ then its maximum degree must be odd.
        
        ($\Rightarrow)$ Let $G$ be a graph with odd maximum degree. Suppose that there is a matching $M\subseteq E(G)$ such that $\Delta(G\setminus M)<\Delta(G)$. We will show that $G$ is not of class $2^\pm$. Clearly, $\Delta(G\setminus M)$ is even since $\Delta(G\setminus M)=\Delta(G)-1$. The proof of Corollary 
        \ref{even_delta_not_class_2} shows that a signed graph $(G\setminus M$, $\mathbf{1}_{E(G)\setminus M})$ is $\Delta(G\setminus M)$-colorable. Since $\Delta(G\setminus M)$ is even, $\Delta(G\setminus M)$-colorings of $G\setminus M$ does not use color $0$. We can extend any of these colorings to a coloring of $G$ by using color $0$
        on $M$, so $(G$, $\mathbf{1}_{E(G)})$ is $\Delta(G)$-edge-colorable. This completes the proof.
        
        ($\Leftarrow$) Let us assume $\Delta(G)$ is odd, there is no matching $M \subseteq E(G)$ such that $\Delta(G\setminus M)<\Delta(G)$ and $G$ is not of class $2^\pm$. We show that this leads to a contradiction. There must exist a signature $\sigma$ such that $(G$, $\sigma)$ is $\Delta(G)$-edge-colorable. Let $c$ be an arbitrary
        $\Delta(G)$-edge-coloring of $(G$, $\sigma)$. Edges colored in $c$ with color $0$ must form a matching $M$. Clearly, $\Delta(G\setminus M)=\Delta(G)$, so graph $(G\setminus M$, $\sigma|_{E(G)\setminus M})$ requires at least $\Delta(G)$ colors different than $0$. It follows that $(G$, $\sigma)$ cannot be colored using $\Delta(G)$
        colors, so $G$ is of class $2^\pm$.
    \end{proof}
    It's easy to observe that signed graphs with $\Delta=1$ can be colored using color $0$ regardless of their signature, so graphs with $\Delta=1$ are of class $1^\pm$.
    \begin{theorem}
        For every $k \geq 1$ there is a graph $G$ of class $2^\pm$ such that $\Delta(G)=2k+1$.
    \end{theorem}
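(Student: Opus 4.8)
The plan is to reduce the statement to a purely combinatorial matching condition via Theorem~\ref{class_2_conditions} and then exhibit an explicit family of graphs. Since the theorem tells us that a graph $G$ with odd $\Delta(G)$ is of class $2^\pm$ exactly when no matching $M$ satisfies $\Delta(G\setminus M)<\Delta(G)$, the cleanest route is to look for \emph{regular} examples. If $G$ is $(2k+1)$-regular, then deleting a matching $M$ lowers every incident vertex's degree by exactly one and leaves unsaturated vertices at degree $2k+1$; hence $\Delta(G\setminus M)<\Delta(G)$ holds if and only if $M$ saturates every vertex, i.e.\ if and only if $M$ is a perfect matching. Therefore the whole task collapses to the following: for each $k\geq 1$, construct a $(2k+1)$-regular graph $G$ with no perfect matching. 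The oddness of $\Delta(G)=2k+1$ is then automatic, and Theorem~\ref{class_2_conditions} immediately yields class $2^\pm$.

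To build such a graph I would first design a gadget $D_k$ on an odd number of vertices that is ``almost'' $(2k+1)$-regular, having a single deficient vertex. Concretely, start from the complete graph $K_{2k+3}$ (in which every vertex has degree $2k+2$) and delete the edges of a subgraph $R$ in which one distinguished vertex $a$ has degree $2$ and each of the remaining $2k+2$ vertices has degree $1$; such an $R$ exists since its degree sum $2+(2k+2)$ is even, and one may take $R$ to be a path $u_1 a u_2$ together with a perfect matching on the other $2k$ vertices. After the deletion the stub $a$ has degree $2k$ and every other vertex of $D_k$ has degree $2k+1$. Then assemble $G$ from a single central vertex $v$ together with $2k+1$ disjoint copies $D_k^{(1)},\dots,D_k^{(2k+1)}$ of the gadget, joining $v$ by one bridge edge to the stub $a_i$ of each copy. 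A routine degree count shows that $v$ has degree $2k+1$, each $a_i$ rises from $2k$ to $2k+1$, and all interior vertices keep degree $2k+1$, so $G$ is $(2k+1)$-regular.

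Finally I would verify that $G$ has no perfect matching by an odd-component argument (essentially Tutte's condition with $S=\{v\}$): the graph $G-v$ falls into $2k+1$ components, each a copy of $D_k$ with $2k+3$ vertices, so each component has odd order. In any perfect matching of $G$ the vertex $v$ is matched through at most one bridge, leaving at least $2k\geq 2$ of these odd components to be matched entirely within themselves, which is impossible by parity. Hence $G$ has no perfect matching, and by the reduction above $G$ is of class $2^\pm$ with $\Delta(G)=2k+1$. I expect the only delicate step to be the gadget design—pinning down $R$ so that the reassembled graph is \emph{exactly} regular—whereas the nonexistence of a perfect matching is the routine odd-component count and the reduction to perfect matchings is immediate from Theorem~\ref{class_2_conditions}.
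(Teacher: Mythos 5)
Your proof is correct, and it takes a genuinely different route from the paper's. You first observe that for a $(2k+1)$-regular graph the condition of Theorem~\ref{class_2_conditions} collapses to the nonexistence of a perfect matching (deleting a matching leaves every unsaturated vertex at degree $2k+1$), and then you build a regular example: $2k+1$ copies of a gadget $D_k$ obtained from $K_{2k+3}$ by deleting a path $u_1au_2$ plus a matching on the remaining $2k$ vertices, all attached by bridges to one central vertex $v$; the Tutte-type count with $S=\{v\}$ ($2k+1$ odd components versus $|S|=1$) rules out a perfect matching. The paper instead constructs a \emph{non-regular} graph: two copies of an augmented $K_{2k,2k}$ (each with a single vertex of degree $2k$) joined through a degree-$2$ vertex $v_c$ by a path $P_3$, and argues directly that any matching lowering the maximum degree would have to cover both odd-order copies, forcing two matching edges at $v_c$. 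Your reduction buys conceptual clarity---it turns the problem into the classical theme of odd-regular graphs without perfect matchings (for $k=1$ your construction is essentially the textbook cubic example), and the gadget verification is routine degree counting; the paper's construction buys a smaller graph (two gadgets instead of $2k+1$) at the cost of a more ad hoc matching argument tailored to the single low-degree vertex. Two small points you could tighten: your odd-component argument implicitly treats each copy of $D_k$ as one component of $G-v$, which requires $D_k$ to be connected (it is, since $K_{2k+3}$ is $(2k+2)$-edge-connected and you delete only $k+2<2k+2$ edges), though even without connectivity the parity count over whole copies still works; and you only need the forward direction of your ``iff'' (no perfect matching implies class $2^\pm$), so the delicate step you flagged---the gadget design---is indeed the only real work, and your choice of $R$ does the job.
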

    \begin{proof}
        We will describe the procedure of construction of a graph $G$ such that $\Delta(G)=2k+1$ and $G$ is of class $2^\pm$. We start by constructing a complete bipartite graph $H$ with parts $\{u_1$, \ldots, $u_{2k}\}$ and $\{v_1$, \ldots, $v_{2k}\}$. Next we create a graph $H'$ by adding edges $u_{2i-1}u_{2i}$ for
        $1\leq i\leq k$ to $H$, a new vertex $v$ and edges $v_iv$ for $1\leq i\leq 2k$. It is easy to see that 
        \begin{displaymath}
            \deg_{H'}(w)=
            \begin{cases}
                2k\text{, } & \text{if $v=w$,}\\
                2k+1\text{, } & \text{otherwise.}
            \end{cases}
        \end{displaymath}
        Now we construct the graph $G$ by creating two disjoint copies of $H'$ and connecting their lowest degree vertices with path $P_3$ in such a way that these vertices are the endpoints of the path (see Figure \ref{fig3} for an example). By $v'_1$, $v_c$, $v'_2$ we denote vertices of that path and by $v'_1v_c$, $v_cv'_2$---its
        edges. Let's observe that $\deg_G(v_c)=2$ and other vertices of $G$ have degree $2k+1$.
        
        To complete the proof it suffices to show that $G$ is of class $2^\pm$. Clearly, $\Delta(G)$ is odd, so Theorem \ref{class_2_conditions} tells us that we need to show that there is no matching $M\subseteq E(G)$ such that $\Delta(G\setminus M)<\Delta(G)$. Let's assume there is such a matching $M$ in $G$. We show that it leads to
        a contradiction. Clearly, $M$ must cover all the vertices of $G$ other than $v_c$---it must cover all the vertices of both copies of $H'$. Let's observe that $H'$ has an odd number of vertices, so there is no perfect matching in it. It means that in order to cover vertices of both copies of $H'$, $M$ must contain edges
        $v'_1v_c$ and $v_cv'_2$. Then $M$ contains two edges incident with $v_c$, so $M$ is not a matching, a contradiction that completes the proof.
    \end{proof}
    \begin{figure}
        \centering
        \begin{tikzpicture}[thick]
            \node[vertex] (u1) at (-0.5,0) {};
            \node[vertex] (u2) at (-0.5,1) {};
            \node[vertex] (u3) at (-0.5,2) {};
            \node[vertex] (u4) at (-0.5,3) {};
            \node[vertex] (v1) at (1,0) {};
            \node[vertex] (v2) at (1,1) {};
            \node[vertex] (v3) at (1,2) {};
            \node[vertex] (v4) at (1,3) {};
            \node[vertex] (v) at (2, 1.5) {};
            \node[vertex] (center) at (3, 1.5) {};
            \node[vertex] (v') at (4, 1.5) {};
            \node[vertex] (u1') at (6.5,0) {};
            \node[vertex] (u2') at (6.5,1) {};
            \node[vertex] (u3') at (6.5,2) {};
            \node[vertex] (u4') at (6.5,3) {};
            \node[vertex] (v1') at (5,0) {};
            \node[vertex] (v2') at (5,1) {};
            \node[vertex] (v3') at (5,2) {};
            \node[vertex] (v4') at (5,3) {};
        
            \foreach \source / \dest in
                {u1/v1,u1/v2,u1/v3,u1/v4,
                u2/v1,u2/v2,u2/v3,u2/v4,
                u3/v1,u3/v2,u3/v3,u3/v4,
                u4/v1,u4/v2,u4/v3,u4/v4,
                v1/v,v2/v,v3/v,v4/v,
                v/center,center/v',
                v1'/v',v2'/v',v3'/v',v4'/v',
                u1'/v1',u1'/v2',u1'/v3',u1'/v4',
                u2'/v1',u2'/v2',u2'/v3',u2'/v4',
                u3'/v1',u3'/v2',u3'/v3',u3'/v4',
                u4'/v1',u4'/v2',u4'/v3',u4'/v4'}
                \path [edge] (\source) -- (\dest);
            \foreach \source / \dest in
                {u1/u2,u3/u4}
                \path (\source) edge [bend left=70] (\dest);
            \foreach \source / \dest in
                {u1'/u2',u3'/u4'}
                \path (\source) edge [bend right=70] (\dest);
        \end{tikzpicture}
        \caption{Example graph of class $2^\pm$ with $\Delta = 5$.\label{fig3}}
    \end{figure}
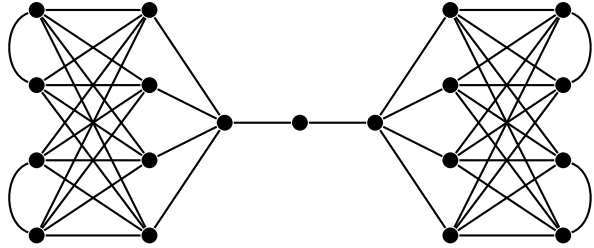

\section{Cacti}\label{section3}
    A connected graph $G$ is called \emph{cactus} if and only if every edge of $G$ belongs to at most one cycle. Clearly every tree is a cactus as it doesn't have cycles at all. It is a well known fact \cite{janczewski_cactus} that graph $G$ is a cactus if and only if there exists a sequence of graphs $G_1$, \ldots, $G_k$ called a
    \emph{decomposition of $G$}, such that:
    \begin{enumerate}
        \item $G_i$ is a cycle or a path $P_2$ for $1\leq i\leq k$;
        \item $G_i$ has exactly one vertex in common with $G_1\cup\ldots\cup G_{i-1}$ for $2\leq i\leq k$;
        \item $G = G_1\cup\ldots\cup G_k$. 
    \end{enumerate}
    We prove that almost all signed cacti are of class $1^\pm$.
    \begin{theorem}\label{cactus_coloring}
        Let $G$ be a cactus. If $G$ is not a cycle, then $G$ is of class $1^\pm$.
    \end{theorem}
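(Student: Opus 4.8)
The plan is to invoke Behr's theorem: since $\Delta(G)\le\chi'(G,\sigma)$ always holds, it suffices to construct, for an \emph{arbitrary} signature $\sigma$, an $n$-edge-coloring with $n=\Delta(G)$. Carrying this out for every $\sigma$ shows $\chi'(G,\sigma)=\Delta(G)$ for all $\sigma$, i.e.\ that $G$ is of class $1^\pm$. I would build the coloring incrementally along a decomposition $G_1,\ldots,G_k$ of the cactus into cycles and copies of $P_2$. Writing $H_i=G_1\cup\cdots\cup G_i$, the idea is to maintain a valid $\Delta(G)$-edge-coloring of $H_i$ and extend it to $H_{i+1}$, using that $G_{i+1}$ meets $H_i$ in exactly one vertex $w_{i+1}$, so every vertex of $G_{i+1}$ other than $w_{i+1}$ is new and carries no previously colored incidence.

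Two preliminary observations drive the bookkeeping. First, because $G$ is a connected cactus that is not a cycle, if $G$ contains any cycle at all then the cut vertex joining that cycle block to the rest of $G$ has degree at least $3$; hence $\Delta(G)\ge 3$ whenever a cycle piece occurs in the decomposition, while if no cycle occurs then $G$ is a tree and the construction only ever adds edges. Second, a degree count at the shared vertex: when $G_{i+1}$ is a cycle through $w_{i+1}$ it contributes $2$ to $\deg(w_{i+1})$, and since $\deg_{H_{i+1}}(w_{i+1})\le\deg_G(w_{i+1})\le\Delta(G)$, at most $\Delta(G)-2$ colors are already present at $w_{i+1}$, leaving at least two free colors there; when $G_{i+1}$ is an edge at $w_{i+1}$ it contributes $1$, leaving at least one free color. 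The edge case is then immediate: pick a free color $a$ at $w_{i+1}$ for its incidence with the new edge $w_{i+1}x$ and set $f(x\colon w_{i+1}x)=-\sigma(w_{i+1}x)\,a$, which lies in the negation-symmetric palette $M_{\Delta(G)}$.

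The crux---and the step I expect to be the main obstacle---is extending the coloring across a cycle piece $C=G_{i+1}$ attached at $w:=w_{i+1}$, so that the two incidences of $C$ at $w$ avoid the set $S$ of colors already used there while all of $C$ is colored consistently with $\sigma$. I would cut $C$ at $w$ into a path $w=v_0,v_1,\ldots,v_{\ell-1},v_0$, choose two \emph{distinct} free colors $a,b\in M_{\Delta(G)}\setminus S$ (possible since at least two are free), and assign them to the two incidences at $w$. Propagating along the path, each interior incidence is forced across an edge and then chosen at the next vertex subject only to differing from the one incoming color; the last edge $v_{\ell-1}v_0$ is forced to take value $-\sigma(v_{\ell-1}v_0)\,b$ at $v_{\ell-1}$, giving a single closing constraint that this value differ from the color arriving at $v_{\ell-1}$. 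Since the interior vertices are new and the palette has at least $3$ colors, the one free choice immediately before $v_{\ell-1}$ has at most two forbidden values and can always be made to satisfy this constraint. A point worth highlighting is that this argument never uses whether $C$ is balanced or unbalanced---with three or more available colors a suitable ``defect'' can always be inserted to close the cycle, which is exactly what fails for a bare cycle with only $\Delta=2$ colors.

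Finally I would assemble these steps into an induction on $i$: color $G_1$ (trivially if it is an edge, and by the cycle argument above with $S=\emptyset$ if it is a cycle, which already forces $\Delta(G)\ge 3$), and repeatedly extend. The result is a valid $\Delta(G)$-edge-coloring of $(G,\sigma)$ for the arbitrary $\sigma$ we started with, completing the proof. The only care needed is the degree bookkeeping guaranteeing at least two free colors at each cycle attachment and at least one at each edge attachment, which follows purely from $\deg_G(w)\le\Delta(G)$ and the degree contribution of the incoming piece.
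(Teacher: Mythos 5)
Your proposal is correct, and its skeleton --- induction along the standard cactus decomposition into cycles and copies of $P_2$, with the same degree bookkeeping yielding at least one free color at an edge attachment and at least two free colors at a cycle attachment --- coincides exactly with the paper's proof. Where you genuinely diverge is the crux step of extending the coloring across an attached cycle. The paper colors the cycle by splitting its edge set into subpaths, each colored with a pair of opposite colors $\{\gamma, -\gamma\}$ or with the color $0$, which forces a three-way case analysis on the two free colors $\alpha$, $\beta$ at the attachment vertex: $\alpha=-\beta$ with $\Delta(G)=3$ (where an edge of the cycle is colored $0$), $\alpha=-\beta$ with $\Delta(G)>3$ (where a third pair $\pm\gamma$ is recruited), and $\alpha\neq-\beta$. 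Your greedy propagation --- fix two distinct free colors $a$, $b$ on the incidences at the cut vertex, force colors across each edge via $f(u\colon uv)=-\sigma(uv)f(v\colon uv)$, choose freely at each new interior vertex, and absorb the single closing constraint at the penultimate vertex, where at most two values are forbidden out of $|M_{\Delta(G)}|=\Delta(G)\geq 3$ --- eliminates this case analysis entirely: it never needs to know whether $a=-b$, whether $0$ lies in the palette, or whether the cycle is balanced, and it makes transparent exactly why the argument fails for a bare cycle (with $\Delta=2$ there is no slack to place the closing defect). In exchange, the paper's version produces a more structured coloring (explicit opposite-pair subpaths), consistent with Behr's path-coloring machinery used throughout the paper; both arguments yield linear-time coloring procedures. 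I checked the boundary case of a triangle in your scheme: there the ``free choice immediately before $v_{\ell-1}$'' is the very first choice after the cut vertex, its incoming color is the forced value $-\sigma(v_0v_1)a$, and the two-forbidden-values count still holds, so no gap arises.
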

    \begin{proof}
        Let $\sigma\colon E(G)\to\{\pm 1\}$ be a function. If $\Delta(G)\leq 2$, then $G$ is a path, so our claim is obvious. Therefore we assume that $\Delta(G)\geq 3$ in the remainder of the proof.
        
        Let $G_1$, \ldots, $G_k$ be an arbitrary decomposition of $G$ meeting the three above conditions. By $G'_l$ we denote $G_1\cup\ldots\cup G_l$. Clearly, $G'_k = G$. We will use induction on $l$ to prove that $(G'_l$, $\sigma|_{E(G'_l)})$ is $\Delta(G)$-edge-colorable. Let us observe that $G'_1$ is a cycle or path, so $(G'_1$,
        $\sigma|_{E(G'_1)})$ can be colored using $\Delta(G)\geq 3$ colors. Let's assume that $(G'_l$, $\sigma|_{E(G'_l)})$ is $\Delta(G)$-edge-colorable for any $l<x$. We will show that $(G'_x$, $\sigma|_{E(G'_x)})$ can be colored using at most $\Delta(G)$ colors. Let us note that $G'_x=G'_{x-1}\cup G_x$ and $(G'_{x-1}$,
        $\sigma|_{E(G'_{x-1})})$ is $\Delta(G)$-edge-colorable. By $c$ we denote an arbitrary $\Delta(G)$-edge-coloring of $(G'_{x-1}$, $\sigma|_{E(G'_{x-1})})$. It follows from the definition of the decomposition of $G$ that $G_x$ is either a path $P_2$ or a cycle and $G_x$ has exactly one vertex in common with $G'_{x-1}$. Let us
        consider the two cases separately.
        \begin{enumerate}
            \item $G_x$ is a path $P_2$. By $u$, $v$ we denote vertices of $G_x$ and without loss of generality we assume $u \in V(G'_{x-1})$. Clearly, there are at most $\Delta(G)-1$ edges incident to $u$ in $G'_{x-1}$, so there must be a color $\alpha\in M_{\Delta(G)}$ not used on any of these edges by $c$. Let's color $u\colon uv$
            with $\alpha$ and $v\colon uv$ with $-\sigma(uv)\alpha$. That way $(G'_x$, $\sigma|_{E(G'_x)})$ can be colored using $\Delta(G)$ colors.
            
            \item\label{cacti_case_cycles} $G_x$ is a cycle. By $u\in V(G_x)$ we denote a vertex such that $u\in V(G'_{x-1})$. By $uv_1$, $uv_2$ we denote two different edges of $G_x$. Clearly, there are at most $\Delta(G)-2$ edges incident to $u$ in $G'_{x-1}$, so there must be two different colors $\alpha$, $\beta\in M_{\Delta(G)}$
                not used on any of these edges by $c$. Let us consider three possible subcases.
                \begin{enumerate}
                    \item\label{cacti_cycle_1} $\alpha=-\beta$ and $\Delta(G)=3$. Since $\Delta(G)=3$, $0\in M_{\Delta(G)}$ and $\alpha\neq 0$, $\beta\neq 0$. By $w$ we denote vertex in $G_x$ such that $w$ is adjacent to $v_1$ and $w\neq u$. Incidences $v_1\colon v_1w$ and $w\colon v_1 w$ of $G'_x$ can be colored with color $0$. Let
                        us observe that other edges of $G_x$ form a path, so their incidences can be colored with colors $\pm\alpha$ in the coloring of $(G'_x$, $\sigma|_{E(G'_x)})$.
                    
                    \item\label{cacti_cycle_2} $\alpha=-\beta$ and $\Delta(G)>3$. Since $\Delta(G)>3$, there must be colors $\gamma$, $-\gamma\in M_{\Delta(G)}$ such that $\gamma\neq 0$, $\gamma\neq\alpha$, $\gamma\neq\beta$. Let us observe that edges $v_1u$, $uv_2$ form a path in $G_x$, so their incidences can be colored with colors
                        $\pm\alpha$ in the coloring of $(G'_x$, $\sigma|_{E(G'_x)})$. Other edges of $G_x$ also form a path---can be colored with colors $\pm\gamma$.
                    
                    \item\label{cacti_cycle_3} $\alpha\neq -\beta$. Without loss of generality let us assume $\beta\neq 0$. Incidence $u\colon uv_1$ can be colored with $\alpha$ in the coloring of $(G'_x$, $\sigma|_{E(G'_x)})$ and $v_1\colon uv_1$---with $-\sigma(uv_1)\alpha$. Edges of $G_x$ other than $uv_1$ form a path and can be
                        colored with $\pm\beta$ in such a way that $u\colon uv_2$ gets color $\beta$.
                \end{enumerate}
        \end{enumerate}
    \end{proof}
\begin{figure}
    \centering
    \begin{tikzpicture}[thick, scale=0.7]
        \node[vertex, label={[xshift=-1em, yshift=-1em] $u$}] (1) at (0,0) {};
        \node[vertex, label={[xshift=0em, yshift=0em] $v_1$}] (2) at (1.5,1.5) {};
        \node[vertex, label={[xshift=0em, yshift=-2em] $v_2$}] (3) at (1.5,-1.5) {};
        \node[vertex, label={[xshift=1em, yshift=-1em] $w$}] (4) at (3.5, 0.75) {};
        \node[color-label] (uv1_label) at (0.1,0.5) {$\alpha$};
        \node[color-label] (v1u_label) at (0.9,1.4) {$\pm\alpha$};
        \node[color-label] (uv2_label) at (-0.1,-0.5) {$-\alpha$};
        \node[color-label] (v2u_label) at (0.9,-1.4) {$\pm\alpha$};
        \node[color-label] (v1w_label) at (2,1.7) {$0$};
        \node[color-label] (wv1_label) at (3.2,1.25) {$0$};
        \node[color-label] (path_label) at (4,-1) {$\alpha$, $-\alpha$};
        \node[graph-label] (label) at (1.75,-2.6) {Case \ref{cacti_cycle_1}};
    
        \foreach \source / \dest in {1/2,1/3,2/4}
                \path [edge] (\source) -- (\dest);
        \path [possible-path] (4) edge [bend left=70] node {} (3);
    \end{tikzpicture}
    \quad
    \begin{tikzpicture}[thick, scale=0.7]
        \node[vertex, label={[xshift=-1em, yshift=-1em] $u$}] (1) at (0,0) {};
        \node[vertex, label={[xshift=0em, yshift=0em] $v_1$}] (2) at (1.5,1.5) {};
        \node[vertex, label={[xshift=0em, yshift=-2em] $v_2$}] (3) at (1.5,-1.5) {};
        \node[color-label] (uv1_label) at (0.1,0.5) {$\alpha$};
        \node[color-label] (v1u_label) at (0.9,1.4) {$\pm\alpha$};
        \node[color-label] (uv2_label) at (-0.1,-0.5) {$-\alpha$};
        \node[color-label] (v2u_label) at (0.9,-1.4) {$\pm\alpha$};
        \node[color-label] (path_label) at (3.2,0) {$\gamma$, $-\gamma$};
        \node[graph-label] (label) at (1.75,-2.6) {Case \ref{cacti_cycle_2}};
    
        \foreach \source / \dest in {1/2,1/3}
                \path [edge] (\source) -- (\dest);
        \path [possible-path] (2) edge [bend left=80] node {} (3);
    \end{tikzpicture}
    \quad
    \begin{tikzpicture}[thick, scale=0.7]
        \node[vertex, label={[xshift=-1em, yshift=-1em] $u$}] (1) at (0,0) {};
        \node[vertex, label={[xshift=0em, yshift=0em] $v_1$}] (2) at (1.5,1.5) {};
        \node[vertex, label={[xshift=0em, yshift=-2em] $v_2$}] (3) at (1.5,-1.5) {};
        \node[color-label] (uv1_label) at (0.1,0.5) {$\alpha$};
        \node[color-label] (v1u_label) at (0.9,1.4) {$\pm\alpha$};
        \node[color-label] (uv2_label) at (0.1,-0.5) {$\beta$};
        \node[color-label] (v2u_label) at (0.9,-1.4) {$\pm\beta$};
        \node[color-label] (path_label) at (3.2,0) {$\beta$, $-\beta$};
        \node[graph-label] (label) at (1.75,-2.6) {Case \ref{cacti_cycle_3}};
    
        \foreach \source / \dest in {1/2,1/3}
                \path [edge] (\source) -- (\dest);
        \path [possible-path] (2) edge [bend left=80] node {} (3);
    \end{tikzpicture}
    \caption{Three possible edge colorings of $(G_x$, $\sigma|_{E(G'_x)})$ considered in the case \ref{cacti_case_cycles} of the proof of Theorem \ref{cactus_coloring}.}
\end{figure}
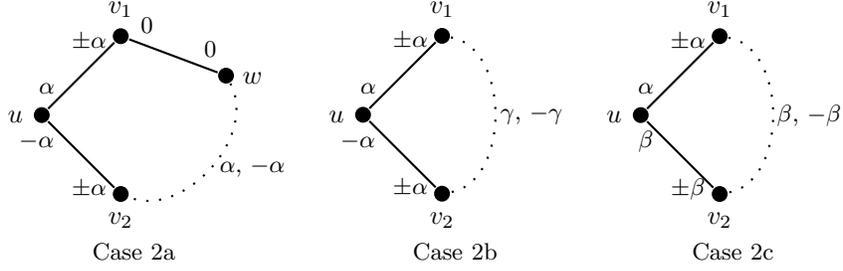

The decomposition of an $n$-vertex cactus can be done in $O(n)$ time. This implies that the coloring procedure being a part of the above proof is also linear.

\section{Wheels}\label{section4}
    Wheel $W_n$ is a graph on $n$ vertices consisting of a cycle $C_{n-1}$, a vertex $v\notin V(C_{n-1})$ and edges between $v$ and all the vertices of a cycle $C_{n-1}$.

    \begin{theorem}
        All wheels are of class $1^\pm$.
    \end{theorem}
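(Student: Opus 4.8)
The plan is to show $\chi'(W_n,\sigma)=\Delta(W_n)$ for every signature $\sigma$. Write $m=n-1$ for the length of the rim cycle and label its vertices $v_0,\ldots,v_{m-1}$ cyclically, with hub $v$. Since the smallest wheel is $W_4=K_4$, we always have $m\geq 3$: the hub has degree $m$, every rim vertex has degree $3$, and hence $\Delta(W_n)=m\geq 3$. By Behr's theorem the bound $\chi'\geq\Delta$ is automatic, so it suffices to exhibit, for each $\sigma$, an $m$-edge-coloring.

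First I would normalize $\sigma$ by switching. The chromatic index is invariant under switching: given an $m$-coloring $f$ of $(W_n,\sigma)$ and the graph $(W_n,\sigma')$ obtained by switching a set $V'$, the map sending $u\colon e$ to $-f(u\colon e)$ for $u\in V'$ and to $f(u\colon e)$ otherwise is an $m$-coloring of $(W_n,\sigma')$, since negating all incidences at a vertex preserves injectivity and the color set $M_m$, and flips the edge rule exactly on edges with one endpoint in $V'$. Switching a rim vertex flips its two incident rim edges, so on the rim this realizes ordinary cycle switching; I may therefore assume the rim is \emph{all-positive} (balanced case) or has \emph{exactly one negative edge} (unbalanced case). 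The resulting spoke signs are irrelevant: if I want the spoke incidence at $v_i$ to carry color $d_i$, I color its hub end with $c_i=-\sigma(vv_i)d_i$, and as $d=(d_0,\ldots,d_{m-1})$ ranges over all bijections onto $M_m$ so does $c$. Thus I am free to take $d$ to be any bijection $\{v_0,\ldots,v_{m-1}\}\to M_m$.

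With the spokes fixed the hub is properly colored, since the $c_i$ exhaust $M_m$, and only the signed rim remains. Writing $x_i$, $y_i$ for the colors of the two incidences of $e_i=v_iv_{i+1}$ at $v_i$, $v_{i+1}$, a positive edge forces $x_i=-y_i$, which I encode by a single value $p_i$ with $x_i=p_i$, $y_i=-p_i$; the (at most one) negative edge forces $x_i=y_i=p_i$. The coloring is legal precisely when, at every $v_i$, the colors $d_i$, $x_i$, $y_{i-1}$ are pairwise distinct. For all-positive edges this is the cyclic system $p_i\neq -p_{i-1}$, $p_i\neq d_i$, $p_{i-1}\neq -d_i$, and the negative edge deletes one minus sign in the two conditions at its endpoints. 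The theorem thus reduces to choosing a bijection $d$ and a cyclic sequence $p$ satisfying these local constraints.

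For even $m=2k$ (so $0\notin M_m$) this is clean: order the colors cyclically as $\rho=(1,2,\ldots,k,-1,-2,\ldots,-k)$, for which negation is exactly the shift by $k$, and set $d_{v_j}=\rho_j$, $p_j=\rho_{j+1}$. Each of the three conditions becomes an inequality of indices $\pmod{2k}$ that reduces to $k\neq 0,1$, which holds because $k\geq 2$; this settles the balanced even case. The genuinely delicate point, and the step I expect to be the main obstacle, is that the remaining subcases each introduce one localized collision: for odd $m=2k+1$ the color $0$ is a fixed point of negation, so the rotation collides at the vertex whose spoke is colored $0$ (where the incoming rim incidence also wants to be $0$), and the unique negative edge likewise turns one inequality into an equality at one of its endpoints. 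I would resolve these by a bounded local repair — rerouting the two or three edges around the offending vertex by swapping a few colors, exploiting the freedom in $d$, and in the unbalanced odd subcase placing the negative edge adjacent to the $0$-spoke so its sign change absorbs the parity defect created by the fixed point. The base instance $m=3$, i.e.\ $W_4=K_4$, can be verified directly (equivalently via Lemma \ref{regular_lemma}), anchoring the odd case.
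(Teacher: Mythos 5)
Your proposal has a genuine gap at the step that everything else depends on: the claim that spoke signs are irrelevant because ``as $d=(d_0,\ldots,d_{m-1})$ ranges over all bijections onto $M_m$ so does $c$'' is false. The hub has degree $m$, so the hub-end colors $c_i=-\sigma(vv_i)d_i$ must be pairwise distinct, and this is \emph{not} guaranteed when $d$ is a bijection: since $M_m$ is closed under negation and $d$ is onto, $d$ necessarily contains an antipodal pair $d_i=a$, $d_j=-a$, and if $\sigma(vv_i)=-\sigma(vv_j)$ then $c_i=c_j$, an improper coloring at the hub. (Concretely, $m=4$, $d_i=1$, $d_j=-1$, $\sigma(vv_i)=1$, $\sigma(vv_j)=-1$ gives $c_i=c_j=-1$.) The forced direction is the opposite of what you use: the hub-side colors $c$ must form the bijection, and then the rim-side spoke colors $d_i=-\sigma(vv_i)c_i$ are \emph{determined} by the spoke signs and are in general not a bijection (for even $m$ the multiset $\{d_i\}$ can contain a color twice and omit its negative). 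Your rotation scheme $d_{v_j}=\rho_j$, $p_j=\rho_{j+1}$ presupposes exactly the freedom you do not have, so even the ``clean'' even balanced case is unproved as written. A second gap is self-declared: the odd-$m$ case and both unbalanced subcases --- where the real difficulties sit, because $0$ is a fixed point of negation and the rim carries a parity defect --- are handled only by a promised ``bounded local repair,'' which is a plan rather than an argument; the base case $m=3$ does not anchor an induction, since no induction is set up.

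For contrast, the paper's proof avoids signatures entirely: it decomposes $W_n$ for odd $n=2k+1$ into exactly $k$ edge-disjoint paths through the hub, and for even $n$ into $k-1$ such paths plus a single edge (colored $0$), using the fact that a signed path is always $2$-edge-colorable regardless of its signature; $W_4=K_4$ is handled separately by choosing a perfect matching whose negative-edge parity makes the complementary spanning cycle balanced, then invoking Lemma~\ref{regular_lemma}. Because the building blocks (paths, a matching edge) are colorable for \emph{every} signature, no switching normalization, no spoke-sign analysis, and no local repair are needed. If you want to salvage your approach, you must either prove the cyclic constraint system is solvable with $d$ being the sign-forced (generally non-bijective) tuple, or find a decomposition-style argument of the paper's kind that is signature-agnostic from the start.
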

    \begin{proof}
        We consider three cases separately:
        \begin{enumerate}
            \item $n = 4$. Wheel $W_4$ is a complete graph $K_4$. It's easy to observe that $W_4$ has a decomposition into three perfect matchings and edges of any two of them create a spanning cycle. Let us consider an arbitrary signed graph $(W_4$, $\sigma)$ and it's decomposition $D$ into three perfect matchings. If there is an odd
            number of negative edges in $(W_4$, $\sigma)$, there must be a matching in $D$ with an odd number of negative edges. Remaining edges create a balanced spanning cycle. In the opposite case---when there is an even number of negative edges in $(W_4$, $\sigma)$, there must be a matching in $D$ with an even number of negative
            edges. Remaining edges create a balanced spanning cycle. In both cases, $(W_4$, $\sigma)$ has a decomposition into a perfect matching and a balanced spanning cycle. Obviously, $W_4$ is $3$-regular, so it follows from Lemma \ref{regular_lemma} that $(W_4$, $\sigma)$ is $\Delta$-edge-colorable.
            
            \item $n = 2k + 1$, $k \geq 2$. Let $V(W_n) = \{u$, $v_0$, \ldots, $v_{n-2}\}$ and $E(G) = \{uv_0$, \ldots, $uv_{n-2}\} \cup \{v_0 v_1$, $v_1 v_2$, \ldots, $v_{n-2} v_0\}$. For $0 \leq i \leq k - 2$, let $G_i$ be a subgraph of $W_n$, such that $V(G_i) = \{v_i$, $v_{i+1}$, $u$, $v_{i+k}$, $v_{i+k+1}\}$ and $E(G_i) =    
                \{v_{i+1} v_i$, $v_i u$, $u v_{i+k}$, $v_{i+k} v_{i+k+1}\}$. Clearly, $G_i$ is a path and for any different $i_1$, $i_2$ paths $G_{i_1}$, $G_{i_2}$ are edge disjoint.
            
                Let $P$ be a subgraph of $W_n$, such that $V(P) = \{v_{k-1}$, $v_{k}$, $u$, $v_{n-2}$, $v_{0}\}$ and $E(P) = \{v_{k} v_{k-1}$, $v_{k-1} u$, $u v_{n-2}$, $v_{n-2} v_{0}\}$. Clearly, $P$ is a path. We will show that paths $P$, $G_i$ are edge disjoint for any $0 \leq i \leq k - 2$. It's sufficient to show that $\{v_i u$,
                $u v_{i+k}\} \cap \{v_{k-1} u$, $u v_{n-2}\} = \emptyset$ and $\{v_{i+1} v_i$, $v_{i+k} v_{i+k+1}\} \cap \{v_{k} v_{k-1}$, $v_{n-2} v_{0}\} = \emptyset$. The first equality follows from the fact that $i < k - 1 < i + k < n - 2$. Let's assume that the second equality is false, so one of the following cases must hold:
                \begin{enumerate}
                    \item $v_{i+1} v_i = v_k v_{k-1}$. Impossible because $i < k - 1 < k$.
                    \item $v_{i+1} v_i = v_{n-2} v_{0}$. Impossible because $i < i + 1 < n - 2$.
                    \item $v_{i+k} v_{i+k+1} = v_k v_{k-1}$. Impossible because $k - 1 < i + k < i + k + 1$.
                    \item $v_{i+k} v_{i+k+1} = v_{n-2} v_{0}$. Impossible because $i + k \neq 0$ and $i + k + 1 \neq 0$.
                \end{enumerate}
                Let's observe that $m(\bigcup\limits_{i=0}^{k - 2} G_{i}) + m(P) = 4(k - 1) + 4 = 4k = m(W_n)$, so $\bigcup\limits_{i=0}^{k - 2} G_{i} \cup P = W_n$ and $W_n$ has a decomposition into exactly $k = \Delta(W_n) / 2$ paths. It follows that any signed graph with $W_n$ as an underlying graph can be colored using
                $\Delta(W_n)$ colors because edges of each path can be colored with exactly two colors.
            
                \begin{figure}
                    \centering
                    \begin{tikzpicture}[thick, scale=0.7]
                        \node[vertex] (1) at (0,0) {};
                        \node[vertex] (2) at (-1.7,0.85) {};
                        \node[vertex] (3) at (0,2) {};
                        \node[vertex] (4) at (1.7,0.85) {};
                        \node[vertex] (5) at (1.7,-0.85) {};
                        \node[vertex] (6) at (0,-2) {};
                        \node[vertex] (7) at (-1.7,-0.85) {};
                    
                        \foreach \source / \dest in {2/3,3/1,1/6,6/5}
                                \path [dashed-edge] (\source) -- (\dest);
                        \foreach \source / \dest in {3/4,4/1,1/7,7/6}
                                \path [dotted-edge] (\source) -- (\dest);
                        \foreach \source / \dest in {7/2,2/1,1/5,5/4}
                                \path [edge] (\source) -- (\dest);
    
                    \end{tikzpicture}
                    \caption{The decomposition of $W_7$ into three paths. Distinct paths are marked with solid, dashed and dotted lines.}
                \end{figure}
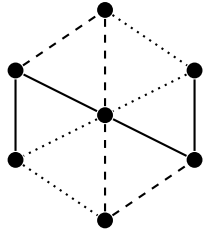
            
            \item $n = 2k$, $k \geq 3$. Let us consider wheel $W_{n-1}$. By $u$ we denote the center vertex of $W_{n-1}$---the one with degree $\Delta(W_{n-1})$ and by $v_1$, $v_2$ we denote arbitrary vertices adjacent to $u$ such that $v_1 v_2 \in E(W_{n-1})$. It follows from the previous case that $W_{n-1}$ admits a decomposition $D$
                into $k - 1$ paths. We construct graph $W_{n-1}'$ from $W_{n - 1}$ by adding new vertex $v$ and replacing edge $v_1 v_2$ by path $v_1$, $v$, $v_2$. It's easy to observe that $W_{n-1}'$ also admits some decomposition $D'$ into $k - 1$ paths. It can be constructed directly from $D$---path containing edge $v_1 v_2$ in $D$
                contains edges $v_1 v$, $v v_2$ in $D'$. Clearly, graph $W_n$ can be obtained from $W_{n-1}'$ by adding edge $v u$. It follows that $W_n$ admits a decomposition into $k - 1$ paths and a single edge, so every signed graph with $W_n$ as an underlying graph can be colored using $2k - 1 = \Delta(W_n)$ colors.
            
                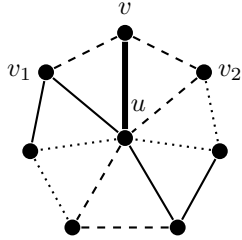
\begin{figure}
                    \centering
                    \begin{tikzpicture}[thick, scale=0.7]
                        \node[vertex, label={[xshift=0.5em, yshift=0.3em] $u$}] (1) at (0,0) {};
                        \node[vertex, label={[xshift=-1em, yshift=-1em] $v_{1}$}] (2) at (-1.5,1.25) {};
                        \node[vertex, label={[xshift=0em, yshift=0em] $v$}] (3) at (0,2) {};
                        \node[vertex, label={[xshift=1em, yshift=-1em] $v_{2}$}] (4) at (1.5,1.25) {};
                        \node[vertex] (5) at (1.8,-0.25) {};
                        \node[vertex] (6) at (1,-1.7) {};
                        \node[vertex] (7) at (-1,-1.7) {};
                        \node[vertex] (8) at (-1.8,-0.25) {};
                    
                        \foreach \source / \dest in {2/3,3/4,4/1,1/7,7/6}
                                \path [dashed-edge] (\source) -- (\dest);
                        \foreach \source / \dest in {4/5,5/1,1/8,8/7}
                                \path [dotted-edge] (\source) -- (\dest);
                        \foreach \source / \dest in {5/6,6/1,1/2,2/8}
                                \path [edge] (\source) -- (\dest);
                        \foreach \source / \dest in {1/3}
                                \path [very-thick-edge] (\source) -- (\dest);
    
                    \end{tikzpicture}
                    \caption{The decomposition of $W_8$ into three paths and a single edge. Edge is marked with a bold line and distinct paths are marked with solid, dashed and dotted lines.}
                \end{figure}
        \end{enumerate}
    \end{proof}

\section{Necklaces}\label{section5}
    Necklace is a connected graph having a decomposition into $k \geq 2$ paths, in which all vertices except selected two---$u$, $v$---are different. Vertices $u$, $v$ are starting and ending vertices of all paths. If $k=2$, a necklace is a cycle, so it's possible that a signed graph on such necklace is negative and requires
    $\Delta + 1$ colors to be properly colored. We prove necklaces are $\Delta$-colorable in all other cases.

    \begin{theorem}\label{necklace_theorem}
        Let $G$ be a necklace. If $G$ is not a cycle, then $G$ is of class $1^\pm$.
    \end{theorem}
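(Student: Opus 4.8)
The plan is to exploit the defining structure of a non-cycle necklace directly. Since $G$ is not a cycle we have $k\ge 3$, and $G$ consists of $k$ internally disjoint paths $Q_1,\dots,Q_k$ running between the two distinguished vertices $u$ and $v$; every internal vertex has degree $2$ while $\deg_G(u)=\deg_G(v)=k$, so $\Delta(G)=k$. At most one of the $Q_i$ can be the single edge $uv$, the graph being simple. Fix an arbitrary signature $\sigma$; I want to build a $\Delta(G)$-edge-coloring $f\colon I(G)\to M_k$. The key observation is that such a coloring is completely described by the incidence colors it places at $u$ and at $v$: writing $x_i=f(u\colon e^u_i)$ and $y_i=f(v\colon e^v_i)$ for the first and last edges of $Q_i$, properness at $u$ and at $v$ forces $(x_1,\dots,x_k)$ and $(y_1,\dots,y_k)$ to each be a bijection onto $M_k$. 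Conversely, coloring each $Q_i$ is an independent problem once $x_i$ and $y_i$ are prescribed, because the $Q_i$ share only $u$ and $v$.

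So the first real step is a local realizability lemma: for a single $u$–$v$ path $Q_i$ and prescribed $x_i,y_i\in M_k$, decide when an internal coloring of $Q_i$ using colors from $M_k$ exists. A direct propagation along the path, where each edge forces the incidence on its far endpoint and at each internal vertex the two incidences must differ, exposes three regimes. If $Q_i$ is the edge $uv$, the single edge relation pins $y_i=-\sigma(uv)x_i$. If $Q_i$ has length $2$, the only obstruction is the clash at its middle vertex, which amounts to $y_i\ne s_ix_i$, where $s_i=\prod_{e\in Q_i}\sigma(e)$ is the sign of $Q_i$. If $Q_i$ has length at least $3$, then at each internal vertex I have a free color to spend, and since $k\ge 3$ at least one admissible choice always survives the at most two forbidden values; hence every pair $(x_i,y_i)$ is realizable with no constraint at all. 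One may first switch a spanning tree to the all-positive signature to simplify the bookkeeping of the $s_i$, using that $\chi'$ is a switching invariant, but this is only cosmetic.

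With the lemma in hand, the theorem reduces to a purely combinatorial feasibility statement: there exist bijections $x_\bullet,y_\bullet\colon\{Q_1,\dots,Q_k\}\to M_k$ such that $y_i=-\sigma(uv)x_i$ for the (at most one) edge $Q_i=uv$ and $y_i\ne s_ix_i$ for every length-$2$ path $Q_i$, the length-$\ge 3$ paths being unconstrained. I would phrase this as choosing a pairing $\theta$ of the $u$-palette with the $v$-palette together with a placement of the paths onto the $u$-colors, and solve it by Hall's theorem: the length-$\ge 3$ paths admit every color and so supply all the slack, while each length-$2$ path forbids only a single value and therefore keeps an allowed set of size $k-1\ge 2$.

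The hard part will be the tight end of this feasibility argument, namely when almost all paths have length $\le 2$ (the extreme being $G=K_{2,k}$, all paths of length $2$) and, simultaneously, the rigid edge $uv$ is present and the color $0$ occurs (odd $k$). In that regime Hall's condition can only just hold, and a naive greedy assignment can paint itself into a corner on the last path, exactly as a small $k=3$ example with one positive and one negative length-$2$ path shows. I would resolve it by first committing the rigid edge $uv$ to a carefully chosen start color $x_0$ — note that $\theta$ is pinned only at $x_0$, so choosing $x_0$ so that the induced correlated deletions from the two palettes leave the length-$2$ constraints satisfiable — and then completing the remaining $k-1$ colors by splitting the length-$2$ paths according to their sign $s_i$ and deranging each class, which is possible for $k-1\ge 2$. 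Verifying that this choice always meets Hall's condition, uniformly over signatures and over the distribution of path lengths, is the crux of the proof.
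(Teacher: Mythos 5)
Your approach is correct in substance but genuinely different from the paper's. The paper proceeds by induction on pairs of paths: it sorts the paths of the decomposition by length, explicitly colors a base necklace of three paths (odd $\Delta$) or four paths (even $\Delta$), and then attaches two paths at a time, coloring each new pair with a fresh pair of opposite colors $\pm\alpha$ while maintaining the invariant that some incidence at $u$ carries a reserved color ($0$ in the odd case, $1$ in the even case) which is recolored and re-created at every step; this argument doubles as a linear-time coloring procedure. You instead reduce everything to the two palettes at $u$ and $v$. Your local realizability lemma is correct as stated: the edge $uv$ rigidly forces $y_i=-\sigma(uv)x_i$, a length-$2$ path forces only $y_i\neq s_ix_i$, and a path of length at least $3$ imposes no constraint once $k\geq 3$ (forward propagation meets at most two forbidden values per internal choice). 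The crux you defer is in fact routine and your plan does close: in the Hall/SDR formulation each constrained position forbids at most one value, so a failing set must consist entirely of length-$2$ paths whose forbidden values $s_ix_i$ all coincide with a common $c$; since the $x_i$ are distinct, this allows at most one positive path (with $x_i=c$) and one negative path (with $x_i=-c$), hence at most two constrained paths, so Hall can fail only when the rigid edge is present and $k=3$ --- a finite check, where your own example (edge $uv$ plus one positive and one negative length-$2$ path) is exactly the tight instance: $x_0=0$ fails there, but $x_0=1$ works for either sign of $uv$. Two small corrections to your narrative: $K_{2,k}$ is not an extreme case at all, because without the rigid edge \emph{every} bijection $x$ extends to an admissible $y$ for $k\geq 3$ (failure would need all $k$ forbidden values equal, which the above count rules out); and ``deranging each sign class'' fails verbatim when a class is a singleton, so you should rely on the Hall argument rather than class-by-class derangements. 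As for what each route buys: the paper's induction yields an explicit efficient algorithm with no case analysis over signatures, while your reduction cleanly separates the interior freedom of long paths from the endpoint combinatorics, localizes the entire difficulty in one finite feasibility statement, and would adapt directly to variants where colors at $u$ and $v$ are partially prescribed.
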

    \begin{proof}
         $G$ is not a cycle, so $\Delta(G) > 3$. By $D = G_1$, \ldots, $G_{\Delta(G)}$ we denote the decomposition of $G$ into paths. By $u$, $v$ we denote the starting and ending vertices of all the paths from $D$. Without loss of generality, we assume that $m(G_i) \leq m(G_{i+1})$ for $1 \leq i < \Delta(G)$. It follows that if there is a path of length $1$ in $D$, it must be $G_1$. By $u_i^j$ we denote a vertex belonging to path $G_i$ such that the distance between $u_i^j$ and $u$ in $G_i$ is $j$. The definition of $v_i^j$ is analogous. By $S = (G$, $\sigma)$ we denote an arbitrary signed graph with $G$ being its underlying graph. We consider two cases separately:
        \begin{enumerate}
            \item\label{necklace_case_1} $\Delta(G) = 2k + 1$, $k \geq 1$. We will construct necklace $S$ starting with necklace $S_0$ that contains three paths from $D$---$G_1$, $G_2$, $G_3$ and extending it with consecutive pairs of paths. By $S_p$ we denote the necklace constructed from $S_{p-1}$ by adding paths $G_{2p+2}$, $G_{2p+3}$. Clearly, $\Delta(S_0) = 3$. We color edges $u u_2^1$, $v v_3^1$ with color $0$. Let us observe that the remaining edges of $S_0$ span a path that contains following vertices: $u_2^1$, \ldots, $v_2^1$, $v$, $v_1^1$, \ldots, $u_1^1$, $u$, $u_3^1$, \ldots, $v_3^1$. It can be colored with colors $\pm 1$.
            
            Necklace $S_p$ is constructed from $S_{p-1}$ by adding paths $G_{2p+2}$, $G_{2p+3}$. Clearly, $\Delta(S_p) = \Delta(S_{p-1}) + 2$, so there are two new colors $\pm \alpha$ available for coloring of $S_p$. Let us assume there is an edge $u u_l^1$ in $S_{p-1}$ colored with color $0$. We can color edge $u u_{2p+2}^1$ with color $0$ and path $u_l^1$, $u$, $u_{2p+3}^1$, \ldots, $v_{2p+3}^1$, $v$, $v_{2p+2}^1$, \ldots, $u_{2p+2}^1$ with colors $\pm\alpha$. That way edge $u u_l^1$, previously colored with $0$, was recolored. Let us observe that this gives us a $\Delta(S_p)$-edge-coloring of $S_p$. Let us also observe that there is still an edge incident to $u$ such that it's colored with color $0$ in the coloring of $S_p$, so such an edge must exist in the colorings of all $S_0$, \ldots, $S_p$.
            
            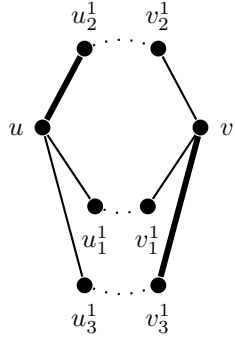
\begin{figure}
                \centering
                \begin{tikzpicture}[thick, scale=0.7]
                    \node[vertex, label={[xshift=-1em, yshift=-1em] $u$}] (1) at (-0.5,0) {};
                    \node[vertex, label={[xshift=1em, yshift=-1em] $v$}] (2) at (2.5,0) {};
                    \node[vertex, label={[xshift=0em, yshift=0em] $u_2^1$}] (3) at (0.3,1.5) {};
                    \node[vertex, label={[xshift=0em, yshift=0em] $v_2^1$}] (4) at (1.7,1.5) {};
                    
                    \node[vertex, label={[xshift=0em, yshift=-2.5em] $u_1^1$}] (5) at (0.5,-1.5) {};
                    \node[vertex, label={[xshift=0em, yshift=-2.5em] $v_1^1$}] (6) at (1.5,-1.5) {};
                    
                    \node[vertex, label={[xshift=0em, yshift=-2.5em] $u_3^1$}] (7) at (0.3,-3) {};
                    \node[vertex, label={[xshift=0em, yshift=-2.5em] $v_3^1$}] (8) at (1.7,-3) {};
                    
                    \foreach \source / \dest in {1/3,8/2}
                            \path [very-thick-edge] (\source) -- (\dest);
                
                    \foreach \source / \dest in {4/2,1/5,6/2,1/7}
                            \path [edge] (\source) -- (\dest);
    
                    \path [possible-path] (3) edge [bend left=20] node {} (4);
                    \path [possible-path] (5) edge [bend right=20] node {} (6);
                    \path [possible-path] (7) edge [bend right=20] node {} (8);

                \end{tikzpicture}
                \caption{Necklace $S_0$ considered in case \ref{necklace_case_1} of the proof of Theorem \ref{necklace_theorem}. Edges colored with color $0$ are marked with bold lines. Remaining edges span a path.}
            \end{figure}
            
            \item\label{necklace_case_2} $\Delta(G) = 2k$, $k \geq 2$. We will construct necklace $S$ starting with necklace $S_0$ that contains four paths from $D$---$G_1$, \ldots, $G_4$ and extending it with consecutive pairs of paths. By $S_p$ we denote the necklace constructed from $S_{p-1}$ by adding paths $G_{2p+3}$, $G_{2p+4}$. Clearly, $\Delta(S_0) = 4$. Let us observe that vertices $u_3^1$, $u$, $u_1^1$, \ldots, $v_1^1$, $v$, $v_2^1$ span a path in $S_0$ and its edges can be colored with colors $\pm 1$. We can assume that incidence $u\colon u u_3^1$ gets color $1$. Let us observe that the remaining edges also span a path, end its edges can be colored with colors $\pm 2$.
            
            Necklace $S_p$ is constructed from $S_{p-1}$ by adding paths $G_{2p+3}$, $G_{2p+4}$. Clearly, $\Delta(S_p) = \Delta(S_{p-1}) + 2$, so there are two new colors $\pm \alpha$ available for coloring of $S_p$. Let us assume there is an incidence $u\colon u u_l^1$ in $S_{p-1}$ colored with color $1$. We can color incidence $u\colon u u_{2p+3}^1$ with color $1$ and $u_{2p+3}^1\colon u u_{2p+3}^1$---with color $-\sigma(u u_{2p+3}^1)$. Path $u_l^1$, $u$, $u_{2p+4}^1$, \ldots, $v_{2p+4}^1$, $v$, $v_{2p+3}^1$, \ldots, $u_{2p+3}^1$ can be colored with colors $\pm \alpha$. That way incidence $u\colon u u_l^1$, previously colored with $1$, was recolored. Let us observe that this gives us a $\Delta(S_p)$-edge-coloring of $S_p$. Let us also observe that there is still an incidence incident to $u$ such that it's colored with color $1$ in the coloring of $S_p$, so such an incidence must exist in the colorings of all $S_0$, \ldots, $S_p$.
            
            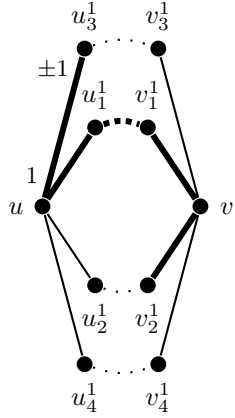
\begin{figure}
                \centering
                \begin{tikzpicture}[thick, scale=0.7]
                    \node[vertex, label={[xshift=-1em, yshift=-1em] $u$}] (1) at (-0.5,0) {};
                    \node[vertex, label={[xshift=1em, yshift=-1em] $v$}] (2) at (2.5,0) {};
                    
                    \node[vertex, label={[xshift=0em, yshift=0em] $u_3^1$}] (9) at (0.3,3) {};
                    \node[vertex, label={[xshift=0em, yshift=0em] $v_3^1$}] (10) at (1.7,3) {};
                    
                    \node[vertex, label={[xshift=0em, yshift=0em] $u_1^1$}] (3) at (0.5,1.5) {};
                    \node[vertex, label={[xshift=0em, yshift=0em] $v_1^1$}] (4) at (1.5,1.5) {};
                    
                    \node[vertex, label={[xshift=0em, yshift=-2.5em] $u_2^1$}] (5) at (0.5,-1.5) {};
                    \node[vertex, label={[xshift=0em, yshift=-2.5em] $v_2^1$}] (6) at (1.5,-1.5) {};
                    
                    \node[vertex, label={[xshift=0em, yshift=-2.5em] $u_4^1$}] (7) at (0.3,-3) {};
                    \node[vertex, label={[xshift=0em, yshift=-2.5em] $v_4^1$}] (8) at (1.7,-3) {};
                    
                    \node[color-label] (uu11) at (-0.7,0.6) {$1$};
                    \node[color-label] (u11u) at (-0.3,2.6) {$\pm 1$};

                    \foreach \source / \dest in {1/9,1/3,2/4,2/6}
                            \path [very-thick-edge] (\source) -- (\dest);
                    
                    \path [very-thick-possible-path] (3) edge [bend left=20] node {} (4);
                
                    \foreach \source / \dest in {1/5,1/7,8/2,10/2}
                            \path [edge] (\source) -- (\dest);

                    \path [possible-path] (9) edge [bend left=20] node {} (10);
                    \path [possible-path] (5) edge [bend right=20] node {} (6);
                    \path [possible-path] (7) edge [bend right=20] node {} (8);

                \end{tikzpicture}
                \caption{Necklace $S_0$ considered in case \ref{necklace_case_2} of the proof of Theorem \ref{necklace_theorem}. Edges colored with colors $\pm 1$ are marked with bold lines. Remaining edges are colored with colors $\pm 2$.}
            \end{figure}
        \end{enumerate}
    \end{proof}

\section{Complete bipartite graphs}\label{section6}
    A graph is bipartite if and only if its vertices can be divided into two sets $V_1$, $V_2$ such that each edge has endpoints in both sets. These sets are usually called \emph{parts}. Graph is a complete bipartite graph if and only if it's bipartite and for every $u \in V_1$, $v \in V_2$ there is an edge $uv$.

    \begin{theorem}\label{bipartite_theorem}
        Let $r$, $t$ be positive integers and $G$ be a complete bipartite graph $K_{r,t}$. If $r\neq t$, then $G$ is of class $1^\pm$.
    \end{theorem}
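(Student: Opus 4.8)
The plan is to reduce the statement to a purely combinatorial decomposition problem and then exhibit the decomposition explicitly using the cyclic proper edge coloring of a complete bipartite graph. Without loss of generality I assume $r<t$, so that $\Delta(G)=t$ and the available colors are exactly $M_t$. Recall from the discussion of Behr's results that a signed path is always $2$-colorable; since the components of a linear forest (a forest all of whose components are paths) are vertex disjoint, any linear forest can be colored with a single opposite pair $\{-\alpha,\alpha\}$ for any signature, and any matching can be colored with color $0$. Hence it suffices to partition $E(K_{r,t})$ into $t/2$ linear forests when $t$ is even, or into $(t-1)/2$ linear forests together with one matching when $t$ is odd: each linear forest consumes one nonzero pair of opposite colors, the optional matching consumes color $0$, and the resulting incidence coloring is automatically proper because distinct pieces use disjoint color sets.

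To build this decomposition I would label $A=\{a_0,\dots,a_{r-1}\}$ and $B=\{b_0,\dots,b_{t-1}\}$, indexing $B$ by $\mathbb{Z}_t$, and use the classical cyclic coloring that assigns to $a_ib_j$ the class $C_{(i+j)\bmod t}$. Each class $C_c=\{a_ib_{(c-i)\bmod t}:0\le i\le r-1\}$ is a matching that saturates the small side $A$. I would then pair consecutive classes, setting $F_s=C_{2s}\cup C_{2s+1}$, color $F_s$ with the opposite pair $\{-(s+1),\,s+1\}$, and --- when $t=2k+1$ is odd --- leave the single leftover class $C_{2k}=C_{t-1}$ as a matching to receive color $0$. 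A direct count shows each $a_i$ then receives every color of $M_t$ exactly once, and each $b_j$ (of degree $r<t$) receives $r$ distinct colors, so properness across the pieces is immediate.

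The one genuine point to verify --- and the place where the hypothesis $r\neq t$ is indispensable --- is that each $F_s$ is in fact a linear forest, i.e.\ acyclic. Here I would trace the components of $F_s$: every $a_i$ has degree $2$ in $F_s$, joined through $B$ to $a_{i+1}$ and $a_{i-1}$ (indices mod $t$) whenever those lie in $\{0,\dots,r-1\}$, so contracting the internal $B$-vertices reduces $F_s$ to the graph on $\{0,\dots,r-1\}$ with adjacencies $i\sim i\pm1\pmod t$. Because $r<t$, the neighbors $a_{-1}=a_{t-1}$ of $a_0$ and $a_r$ of $a_{r-1}$ fall outside the range, so this graph is the single path $a_0-a_1-\cdots-a_{r-1}$; thus $F_s$ is one path and contains no cycle. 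I expect this acyclicity argument to be the crux of the proof. Note it fails exactly when $r=t$: then $a_0$ and $a_{t-1}$ close up into a Hamiltonian cycle, which can be made unbalanced by a suitable signature and hence need not be $2$-colorable --- precisely why $K_{t,t}$ can fail to be of class $1^\pm$ and why the theorem requires $r\neq t$. With the linear-forest property in hand, coloring each $F_s$ with its pair (and the leftover matching with $0$ in the odd case) yields a $\Delta(G)$-edge-coloring of $(G,\sigma)$ for every $\sigma$, completing the argument.
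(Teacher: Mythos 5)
Your proof is correct and takes essentially the same route as the paper: both decompose $K_{r,t}$ (with $r<t$, so $\Delta=t$) into $\lfloor t/2\rfloor$ paths obtained by pairing two consecutive ``diagonal'' matchings, plus one leftover diagonal matching colored $0$ when $t$ is odd --- your $F_s=C_{2s}\cup C_{2s+1}$ coincides with the paper's paths $G_j$ up to relabeling the vertices of the large side. The only differences are presentational: framing the pieces as unions of classes of the cyclic proper coloring makes their edge-disjointness automatic (where the paper verifies it by modular-arithmetic case analysis), and your contraction argument for acyclicity replaces the paper's endpoint-degree count, but the underlying decomposition and coloring are the same.
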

    \begin{proof}
        Without loss of generality, we assume that $r < t$. Let $t = 2s + k$, $s \in \mathbf{N}$ and $k \in \{0$, $1\}$. Let $V(G) = \{u_0$, \ldots, $u_{r-1}\} \cup \{v_0$, \ldots, $v_{t-1}\}$ and $E(G) = \{u_i v_j \colon 0 \leq i \leq r-1$, $0 \leq j \leq t-1\}$.
        
        Let $0 \leq j \leq s - 1$. Let $G_j$ be a subgraph of $G$ such that $V(G_j) = \{u_0$, \ldots, $u_{r-1}\} \cup \{v_{2j}$, \ldots, $v_{(2j + r) \bmod t}\}$ and $E(G_j) = \{u_i v_{(i + 2j) \bmod t} \colon 0 \leq i \leq r-1\} \cup \{u_i v_{(i + 2j + 1) \bmod t} \colon 0 \leq i \leq r-1\}$. Let us observe that:
        \begin{itemize}
            \item $\deg_{G_j}(u_i) = 2$ for $i = 0$, \ldots, $r - 1$;
            \item $\deg_{G_j}(v_{2j}) = \deg_{G_j}(v_{(2j + r) \bmod t}) = 1$;
            \item $\deg_{G_j}(v_{(2j + i) \bmod t}) = 2$ for $i = 1$, \ldots, $r - 1$.
        \end{itemize}
        It follows from the definition of $G_j$ and above observations that $G_j$ is a path. We will show that any two distinct paths $G_{j_1}$, $G_{j_2}$ are edge disjoint. Let us assume it's not true. Then one of the following cases must hold for some $i_1$, $i_2 \in \{0$, \ldots, $r-1\}$.

        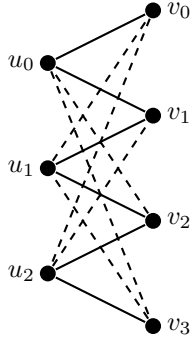
\begin{figure}
            \centering
            \begin{tikzpicture}[thick, scale=0.7]
                \node[vertex, label={[xshift=-1em, yshift=-1em] $u_0$}] (1) at (0,5) {};
                \node[vertex, label={[xshift=-1em, yshift=-1em] $u_1$}] (2) at (0,3) {};
                \node[vertex, label={[xshift=-1em, yshift=-1em] $u_2$}] (3) at (0,1) {};
                
                \node[vertex, label={[xshift=1em, yshift=-1em] $v_0$}] (4) at (2,6) {};
                \node[vertex, label={[xshift=1em, yshift=-1em] $v_1$}] (5) at (2, 4) {};
                \node[vertex, label={[xshift=1em, yshift=-1em] $v_2$}] (6) at (2, 2) {};
                \node[vertex, label={[xshift=1em, yshift=-1em] $v_3$}] (7) at (2, 0) {};
            
                \foreach \source / \dest in {4/1,1/5,5/2,2/6,6/3,3/7}
                        \path [edge] (\source) -- (\dest);
                \foreach \source / \dest in {6/1,1/7,7/2,2/4,4/3,3/5}
                        \path [dashed-edge] (\source) -- (\dest);

            \end{tikzpicture}
            \caption{Paths $G_0$ (solid lines) and $G_1$ (dashed lines) created in the procedure of coloring of complete bipartite graph $K_{3, 4}$ described in the proof of Theorem \ref{bipartite_theorem}.}
        \end{figure}
        
        \begin{enumerate}
            \item $u_{i_1} v_{(i_1 + 2j_1) \bmod t} = u_{i_2} v_{(i_2 + 2j_2) \bmod t}$. Clearly, $u_{i_1} = u_{i_2}$, so $i_1 = i_2$. Moreover, $v_{(i_1 + 2j_1) \bmod t}$ = $v_{(i_2 + 2j_2) \bmod t}$, so $2j_1 \equiv 2j_2 \pmod t$. It follows that $0 \equiv 2 (j_1 - j_2) \pmod t$. Let us consider two cases separately:
            \begin{enumerate}
                \item $t = 2s + 1$. Then $t \equiv 2s + 1 \pmod t$, so $0 \equiv 2s + 1 \pmod t$ and $2s \equiv -1 \pmod t$. We observe that if $0 \equiv 2 (j_1 - j_2) \pmod t$, then $0 \equiv 2s (j_1 - j_2) \pmod t$, so $0 \equiv j_2 - j_1 \pmod t$ and $j_1 \equiv j_2 \pmod t$. It follows that $j_1 = j_2$, a contradiction.
                
                \item $t = 2s$. If $0 \equiv 2 (j_1 - j_2) \pmod t$, then $0 \equiv 2 (j_1 - j_2) \pmod{2s}$ and $0 \equiv j_1 - j_2 \pmod s$, so $j_1 \equiv j_2 \pmod s$. It follows that $j_1 = j_2$, a contradiction.
            \end{enumerate}
        
            \item $u_{i_1} v_{(i_1 + 2j_1 + 1) \bmod t} = u_{i_2} v_{(i_2 + 2j_2 + 1) \bmod t}$. We observe that $i_1 = i_2$, so $2j_1 + 1 \equiv 2j_2 + 1 \pmod t$ and then $2(j_1 - j_2) \equiv 0 \pmod t$. The remaining part of that case is identical to the previous case and leads to a contradiction.
            
            \item Either $u_{i_1} v_{(i_1 + 2j_1) \bmod t} = u_{i_2} v_{(i_2 + 2j_2 + 1) \bmod t}$ or $u_{i_1} v_{(i_1 + 2j_1 + 1) \bmod t} = u_{i_2} v_{(i_2 + 2j_2) \bmod t}$. Without loss of generality, we assume that the first of two holds. Clearly, $i_1 = i_2$, so $2j_1 \equiv 2j_2 + 1 \pmod t$ and then $2(j_1 - j_2) \equiv 1 \pmod t$. We consider two cases separately:
            \begin{enumerate}
                \item $t = 2s + 1$. Then $2s \equiv -1 \pmod t$. Let us observe that $2s(j_1 - j_2) \equiv s \pmod t$, so $j_2 - j_1 \equiv s \pmod t$. It's a contradiction because $j_1$, $j_2 \in \{0$, \ldots, $s-1\}$. 
                \item $t = 2s$. We observe that $2 (j_1 - j_2) - 1 \equiv 0 \pmod{2s}$, so the odd number $2 (j_1 - j_2) - 1$ is divisible by an even number $2s$, a contradiction.
            \end{enumerate}
        \end{enumerate}
        
        All of the cases lead to a contradiction, so paths $G_{j_1}$, $G_{j_2}$ are edge disjoint. Clearly, $\bigcup\limits_{j=0}^{s-1} G_{j} \subseteq G$. We will continue the proof separately for two cases:
        \begin{enumerate}
            \item $t = 2s$. Since $G$ is a complete bipartite graph, $m(G) = rt = 2rs$. We observe that $m(G_j) = 2r$ for $0 \leq j \leq s - 1$. It follows that $m(\bigcup\limits_{j=0}^{s-1} G_{j}) = \sum\limits_{j=0}^{s-1} m(G_{j}) = 2rs$, so $m(G) = m(\bigcup\limits_{j=0}^{s-1} G_{j})$. Since paths $G_j$ are edge disjoint, $G = \bigcup\limits_{j=0}^{s-1} G_{j}$, so $G$ admits a decomposition into exactly $s$ paths. Let us observe that an arbitrary signed graph with $G$ being its underlying graph can be colored with $2s = \Delta(G)$ colors, since each path can be colored with just two colors. It completes a proof for this case.

            \item $t = 2s + 1$. It's clear that $m(G) = rt = r(2s+1)$ and $m(\bigcup\limits_{j=0}^{s-1} G_{j}) = 2rs$. Let $M$ be a subgraph of $G$ such that $V(M) = \{u_0$, \ldots, $u_{r-1}\} \cup \{v_{(i-1) \bmod t} \colon 0 \leq i \leq r - 1\}$ and $E(M) = \{u_i v_{(i-1) \bmod t} \colon 0 \leq i \leq r - 1\}$. It's easy to observe that all the vertices of $M$ have degree equal to $1$, so $M$ is a matching.
        
            We will show that $E(M) \cap E(\bigcup\limits_{j=0}^{s-1} G_{j}) = \emptyset$. We assume that it's not true. Then for some $j$ and $i_1$, $i_2 \in \{0$, \ldots, $r-1\}$ one of the following cases must hold:
            \begin{enumerate}
                \item $u_{i_1} v_{(i_1-1) \bmod t} = u_{i_2} v_{(i_2 + 2j) \bmod t}$. Clearly, $i_1 = i_2$. We observe that $i_1 - 1 \equiv i_2 + 2j \pmod t$, so $2j + 1 \equiv 0 \pmod t$ and then $2js + s \equiv 0 \pmod t$. Since $t = 2s+1$, $2s \equiv -1 \pmod t$, so $j \equiv s \pmod t$. It's a contradiction because $j \in \{0$, \ldots, $s-1\}$.
                \item $u_{i_1} v_{(i_1-1) \bmod t} = u_{i_2} v_{(i_2 + 2j + 1) \bmod t}$. It holds that $i_1 - 1 \equiv i_2 + 2j + 1 \pmod t$, so $2j \equiv -2 \pmod t$ and then $2js \equiv -2s \pmod t$. Since $t = 2s+1$, $j + 1 \equiv 0 \pmod t$. It's a contradiction because $t = 2s + 1$ and $j \in \{0$, \ldots, $s-1\}$.
            \end{enumerate}
            All of the cases lead to a contradiction, so $E(M) \cap E(\bigcup\limits_{j=0}^{s-1} G_{j}) = \emptyset$.
            
            Since $m(M) = r$, $G = E(M) \cup E(\bigcup\limits_{j=0}^{s-1} G_{j})$, $G$ admits a decomposition into exactly $s$ paths and one matching. Let us observe that an arbitrary signed graph with $G$ being its underlying graph can be colored with $2s + 1 = \Delta(G)$ colors, since each path can be colored with two non-zero colors and a matching can be colored with color $0$.
        \end{enumerate}
    \end{proof}

We briefly consider complete bipartite graphs with equal parts. Complete bipartite graph $K_{r, r}$ is an $r$-regular graph, so it follows from Lemma \ref{regular_lemma} that if $r$ is even and $K_{r, r}$ has an odd number of negative edges, it cannot be colored using $\Delta$ colors. We hypothesize that for all other cases, complete bipartite graphs with equal parts are $\Delta$-colorable. If the conjecture is true, graphs with odd $r$ must have a perfect matching $M$ such that $K_{r, r} \setminus M$ have an even number of negative edges and have a decomposition into $(r - 1) / 2$ spanning $2$-regular subgraphs with positive cycles being their components.

\begin{conjecture}
Let $r \in \mathbf{N}$ and $S$ be a signed complete bipartite graph $K_{r, r}$. $\chi'(S) = \Delta(S)$ if one of the following conditions holds:
\begin{enumerate}
    \item $r$ is odd;
    \item $r$ is even and there is an even number of negative edges in $S$.
\end{enumerate}
\end{conjecture}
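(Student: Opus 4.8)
The plan is to reduce the statement to a decomposition problem via Lemma~\ref{regular_lemma}. Since $K_{r,r}$ is $r$-regular, it suffices to exhibit, for every admissible signature, a decomposition of $(K_{r,r},\sigma)$ into $\lfloor r/2\rfloor$ balanced $2$-regular spanning subgraphs when $r$ is even, and into a perfect matching together with $(r-1)/2$ such subgraphs when $r$ is odd. I would build these $2$-factors from the standard diagonal matchings: writing $V(K_{r,r})=\{u_0,\ldots,u_{r-1}\}\cup\{v_0,\ldots,v_{r-1}\}$, for each offset $c\in\{0,\ldots,r-1\}$ put $D_c=\{u_iv_{(i+c)\bmod r}:0\le i\le r-1\}$. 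These are perfect matchings partitioning $E(K_{r,r})$, entirely analogous to the paths $G_j$ in the proof of Theorem~\ref{bipartite_theorem}; the essential difference is that for $r=t$ a union of two diagonals closes up into cycles rather than remaining a path, which is precisely why balance---and hence the signature---now matters.

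The next step records two elementary facts. First, for $c\neq c'$ the $2$-factor $D_c\cup D_{c'}$ splits into exactly $\gcd(c-c',r)$ cycles, each of length $2r/\gcd(c-c',r)$; in particular it is a single Hamiltonian cycle precisely when $\gcd(c-c',r)=1$. Second, setting $p_c=\prod_{i=0}^{r-1}\sigma(u_iv_{(i+c)\bmod r})$, the product of the edge signs around such a Hamiltonian $2$-factor equals $p_cp_{c'}$, so in that case $D_c\cup D_{c'}$ is balanced if and only if $p_cp_{c'}=1$. Since the diagonals partition the edge set, $\prod_{c=0}^{r-1}p_c=\prod_{e\in E(K_{r,r})}\sigma(e)=(-1)^{N}$, where $N$ is the number of negative edges.

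I would then dispatch the parity bookkeeping. When $r$ is even and $N$ is even we get $\prod_c p_c=1$, so an even number of diagonals have $p_c=-1$; the goal becomes to pair the $r$ offsets into $\lfloor r/2\rfloor$ pairs, each of equal sign (forcing $p_cp_{c'}=1$) and coprime difference (forcing a single, hence balanced, Hamiltonian cycle). When $r$ is odd I would first peel off one diagonal $D_{c_0}$ to serve as the perfect matching, choosing $c_0$ so that the remaining offsets carry an even number of $-1$'s---always possible, since a single such choice flips the parity of this count and $\prod_c p_c=(-1)^N$ guarantees a suitable $c_0$ exists---and then pair the remaining $r-1$ offsets in the same fashion. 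Note that when $r$ is an odd prime every nonzero difference is coprime to $r$, so the pairing is immediate and the odd case needs no hypothesis, exactly as the conjecture asserts.

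The main obstacle is the pairing step in the non-prime cases: one must simultaneously respect the sign pattern $(p_c)$ and keep every difference coprime to $r$, and when the $-1$'s are forced to sit at non-coprime distance the corresponding $2$-factor breaks into several cycles whose individual balance is no longer controlled by $p_cp_{c'}$ alone. To handle this I would exploit that the sign of each cycle is a switching invariant (the product of signs around a cycle is unchanged by switching), normalise $\sigma$ by switching a spanning tree to be all-positive so that the residual negative edges occupy a predictable pattern, and then either realise a coprime-difference pairing directly or replace a degenerate multi-cycle $2$-factor by an explicit rerouting that redistributes the negative edges evenly among its cycles. Verifying that a balanced decomposition always exists under the stated parity condition is the crux, and I expect the bulk of the effort to lie in this combinatorial pairing-and-rerouting argument rather than in the reduction itself.
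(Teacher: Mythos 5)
You should be aware at the outset that the statement you are attacking is presented in the paper only as a conjecture: the paper contains no proof, and the discussion preceding it records exactly the reduction you start from, namely that via Lemma~\ref{regular_lemma} one must decompose $(K_{r,r},\sigma)$ into $\lfloor r/2\rfloor$ balanced spanning $2$-regular subgraphs, plus a perfect matching when $r$ is odd. Your setup is sound as far as it goes: the diagonal matchings $D_c$ do partition $E(K_{r,r})$, the $\gcd$ description of the cycle structure of $D_c\cup D_{c'}$ is correct, the identity $\prod_c p_c=(-1)^N$ is correct, and the peeling argument for odd $r$ (choosing $c_0$ so the remaining diagonals carry an even number of $-1$'s) works because $r$ odd forbids all $r$ diagonals being negative when their count is even. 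Your observation that the odd prime case closes immediately is also right.

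The genuine gap is the step you yourself flag as the crux: the pairing-and-rerouting argument is not a proof with a missing routine verification, it \emph{is} the open problem, and the pure diagonal-pairing strategy provably fails. Take $r=4$ and the signature with $\sigma(u_0v_1)=\sigma(u_1v_0)=-1$ and all other edges positive, so $N=2$ is even and $(p_0,p_1,p_2,p_3)=(+,-,+,-)$. The pairing $\{0,1\},\{2,3\}$ puts the single negative edge $u_0v_1$ on a Hamiltonian $2$-factor, which is therefore negative; the pairing $\{0,3\},\{1,2\}$ fails symmetrically via $u_1v_0$; and the pairing $\{0,2\},\{1,3\}$ gives for $D_1\cup D_3$ the two $4$-cycles $u_0v_1u_2v_3$ and $u_1v_2u_3v_0$, each containing exactly one negative edge, hence both negative. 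So \emph{every} pairing of the four diagonals yields an unbalanced $2$-factor, and any proof must construct non-diagonal $2$-factors. Your two proposed escapes do not close this: switching preserves precisely the cycle-sign data you would need to change (balance of every cycle is a switching invariant, so normalising a spanning tree to be positive cannot create a balanced decomposition where none exists among a switching class of candidate $2$-factors), and the ``rerouting that redistributes the negative edges evenly'' is asserted, not constructed. Until that combinatorial core is carried out, what you have is a correct reformulation of the conjecture together with a proof of some easy special cases (e.g.\ $r$ an odd prime), not a proof of the statement.
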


\bibliography{signed}

\end{document}